\documentclass[aoas]{imsart}
\RequirePackage{amsthm,amsmath,amsfonts,amssymb}
\RequirePackage[authoryear]{natbib}
\RequirePackage{graphicx}

\RequirePackage{booktabs}
\RequirePackage{enumerate}
\RequirePackage{mathtools}
\RequirePackage{multirow}
\RequirePackage{subcaption}
\RequirePackage{tabularx}
\RequirePackage{url}

\RequirePackage[acronym]{glossaries}

\RequirePackage{pdfpages}

\startlocaldefs
\theoremstyle{plain}
\newtheorem{theorem}{Theorem}
\newtheorem{lemma}[theorem]{Lemma}
\newtheorem{proposition}[theorem]{Proposition}

\theoremstyle{definition}

\newtheorem{remark}{Remark}

\newcommand{\btheta}{\boldsymbol{\theta}}
\newcommand{\fisher}{\mathcal{I}}

\newacronym{bh}{BH}{Benjamini--Hochberg}
\newacronym{bvm}{BvM}{Bernstein--von Mises}
\newacronym{cdf}{CDF}{cumulative distribution function}
\newacronym{fdr}{FDR}{false discovery rate}
\newacronym{gfi}{GFI}{generalized fiducial inference}
\newacronym{gfd}{GFD}{generalized fiducial distribution}
\newacronym{hgfi}{HGFI}{hybrid generalized fiducial inference}
\newacronym{iid}{i.i.d.}{independent and identically distributed}
\newacronym{irb}{IRB}{Institutional Review Board}
\newacronym{mh}{MH}{Metropolis--Hastings}
\newacronym{nb}{NB}{negative binomial}
\newacronym{pmf}{PMF}{probability mass function}

\newcommand{\nbderivationapp}{Section~1 of the Supplementary Material \citep{thomas_hannig_supp}}
\newcommand{\scorehessianapp}{Section~2 of the Supplementary Material \citep{thomas_hannig_supp}}
\newcommand{\jointfiducialproofapp}{Section~3 of the Supplementary Material \citep{thomas_hannig_supp}}
\newcommand{\blockdiagjacobianproofapp}{Section~4 of the Supplementary Material \citep{thomas_hannig_supp}}
\newcommand{\posdefproofapp}{Section~5 of the Supplementary Material \citep{thomas_hannig_supp}}
\newcommand{\bvmproofapp}{Section~6 of the Supplementary Material \citep{thomas_hannig_supp}}
\newcommand{\simulationapp}{Section~7 of the Supplementary Material \citep{thomas_hannig_supp}}
\newcommand{\discretecompapp}{Section~8 of the Supplementary Material \citep{thomas_hannig_supp}}
\newcommand{\permutationapp}{Section~9.1 of the Supplementary Material \citep{thomas_hannig_supp}}
\newcommand{\thresholdsensapp}{Section~9.2 of the Supplementary Material \citep{thomas_hannig_supp}}

\newcommand{\lanassump}{Assumption~1}
\endlocaldefs

\makeatletter\def\journal@name{}\makeatother

\begin{document}

\begin{frontmatter}
\title{Generalized Fiducial Inference for Hybrid Discrete--Continuous Count Data: Theory and Application}
\runtitle{Hybrid Generalized Fiducial Inference for Count Data}

\begin{aug}
\author[A]{\fnms{Kendall}~\snm{Thomas}\ead[label=e1]{ketho@unc.edu}\orcid{0009-0004-6372-8960}}  
\author[A]{\fnms{Jan}~\snm{Hannig}\ead[label=e2]{jan.hannig@unc.edu}\orcid{0000-0002-4164-0173}}
\address[A]{Department of Statistics and Operations Research, University of North Carolina at Chapel Hill, Chapel Hill, NC, US\printead[presep={,\ }]{e1,e2}}
\end{aug}

\begin{abstract}
High-frequency athlete-tracking data can be summarized as quantile cubes \citep{thomas_hannig_2026}: counts of time spent in bins of velocity, acceleration and movement angle. In our application each athlete--match session has $100$ such bins, with counts from under a hundred to several thousand. Whether a bin's negative binomial distribution is close to normal depends on its shape, the product of mean and dispersion, not on the count alone. We develop a hybrid generalized fiducial framework that uses exact negative binomial structural equations where the estimated shape is small and moment-matched normal approximations where it is large, providing uncertainty quantification for hundreds of regression and dispersion parameters without a prior. Only the normal components contribute to the fiducial Jacobian, which is block diagonal in closed form, and we prove a Bernstein--von Mises theorem for the correctly specified hybrid model. Simulations show near-nominal coverage, agreement with a flat-prior Bayesian analysis, and that the partition should be based on the estimated shape rather than on the mean or the observed responses. In $216$ sessions from $17$ professional women's soccer athletes, the estimated position and age effects concentrate at the velocity extremes and highest accelerations, but an athlete-level permutation check shows these patterns cannot be distinguished from label noise.
\end{abstract}

\begin{keyword}
\kwd{generalized fiducial inference}
\kwd{Bernstein--von Mises theorem}
\kwd{multivariate count data}
\kwd{negative binomial regression}
\kwd{compositional data}
\kwd{sports analytics}
\end{keyword}

\end{frontmatter}

\section{Introduction}
\label{sec:intro}

Wearable tracking systems record an athlete's movement many times per second throughout a soccer match, but comparing athletes or matches requires summaries of the resulting trajectories \citep{bourdon_2017}. Common summaries, such as total distance covered and time above a fixed speed, describe overall workload but not how movement is distributed across combinations of velocity, acceleration and direction \citep{snyder_2024}. Two athletes can therefore have nearly the same totals but very different movement profiles. Quantile cubes provide a distributional representation of these profiles \citep{thomas_hannig_2026}. Velocity, acceleration, and movement angle are each divided into bins at empirical quantiles, and the time spent in each combination is recorded. With five velocity bins, five acceleration bins, and four angle bins (oriented to forward, right, backward, and left), each athlete--match session can be represented as a $d = 5 \times 5 \times 4 = 100$-dimensional vector of counts. Because the same cutoffs are used for all athlete--match sessions, the components have consistent interpretations across sessions. We analyze $216$ athlete--match sessions from $17$ athletes on a professional women's soccer team over one season, and ask how movement time is allocated across the velocity--acceleration--angle grid and how that allocation relates to athlete characteristics.

The quantile-cube counts vary widely. Each marginal bin holds an equal share of the pooled movement time by construction, but the joint combinations do not: common movement states hold thousands of deciseconds (tenths of a second), while unusual combinations of velocity, acceleration and angle hold hundreds. The counts are also overdispersed relative to a Poisson model, so we use a \gls{nb} model \citep[Sec.~4.2]{cameron_trivedi_2013}; see also \citet[Ch.~7]{hilbe_2011}. Whether a \gls{nb} distribution is far from normal is governed by its shape parameter, the mean times the dispersion, which is small mainly in sparsely occupied bins. In these bins, its discreteness, skewness and nonnegativity all affect inference, and the exact distribution matters \citep{hilbe_2017}. Where the shape is large the distribution is close to normal, and a normal approximation with the same mean and variance avoids the cost of repeatedly evaluating and inverting the \gls{nb} distribution function during fiducial sampling. We therefore use a hybrid representation, chosen separately for each bin of each athlete--match session: the exact \gls{nb} model where the estimated shape parameter is small and the normal approximation where it is large.

The number of parameters is a second challenge. Giving each movement bin its own regression coefficients and dispersion parameter yields $q=500$ parameters in our application, $100$ of them dispersions. \Gls{gfi} provides a distribution on this parameter space from the data-generating equations alone, with no prior to specify, which for $500$ parameters, many of them weakly informed, is a real saving. By the \gls{bvm} theorem of Section~\ref{sec:asymptotics}, that distribution agrees to first order with a flat-prior Bayesian analysis, and the simulations confirm the agreement at moderate sample sizes.

Several existing approaches address related features of heterogeneous count data. Continuous distributions are commonly used to approximate large counts \citep{anders_huber_2010}, while composite and working likelihoods give tractable inference when a full joint model is hard to specify \citep{varin_reid_firth_2011}. Mixture, zero-inflated and hurdle models handle other kinds of count heterogeneity, such as excess zeros and latent subpopulations \citep{mullahy_1986, lambert_1992, risso_et_al_2018, feng_2021}. None of these methods yields a generalized fiducial construction in which each observation is modeled either exactly or by a continuous approximation, and the fiducial Jacobian and asymptotic behavior of such a construction have not been studied.

We therefore develop a \gls{hgfi} framework for multivariate count data with many components.\footnote{The asymptotic theory holds the number of components and the parameter dimension fixed while the number of independent observations grows.} \Gls{gfi} produces a data-dependent distribution on the parameter space without a prior \citep{hannig_2009, hannig_et_al_2016}. It starts from a structural equation $\mathbf Y=G(\mathbf U,\btheta)$ that writes the data as a function of the parameter $\btheta$ and auxiliary variables $\mathbf U$ with a known distribution. Given the observed $\mathbf y$, the equation is inverted: the auxiliary variables are drawn from their known distribution, and the parameter values for which $G(\mathbf U,\btheta)$ reproduces $\mathbf y$ are kept. The \gls{gfd} is the limit of this procedure as the required agreement with $\mathbf y$ becomes exact. Under conditions given in \citet{hannig_et_al_2016}, the resulting density is
$$r(\btheta\mid\mathbf y)\propto L(\mathbf y\mid\btheta)\,J(\btheta,\mathbf y),$$
where $L$ is the likelihood and $J$ is a Jacobian-type factor determined by the structural equation. Asymptotic normality and \gls{bvm} results are available for broad classes of such procedures \citep{hannig_et_al_2016, borgert_hannig_2026}. We build on this theory to derive the hybrid generalized fiducial density and establish its asymptotic behavior.

The theory uses an oracle partition based on the true \gls{nb} shapes. The practical rules for making the partition are compared by simulation. The theory applies to a fixed partition that depends on the design but not on the responses, under a correctly specified hybrid model in which observations are conditionally independent. It therefore says nothing about three departures from that setting in our application: the dependence among movement bins induced by a fixed total time, repeated sessions from the same athlete, and a partition estimated from the data. Components that receive the exact \gls{nb} model in every observation contribute nothing to the continuous Jacobian and are not covered by the joint \gls{bvm} result. They are handled by the discrete construction of Section~\ref{sec:method:discrete}. Section~\ref{sec:discuss:limitations} discusses these limits on the theory's scope.

This paper makes four contributions. First, we derive the generalized fiducial density for the mixed discrete--continuous model and show that, when the continuous structural equations have full column rank and are injective, it factors into the hybrid likelihood and a Jacobian that comes only from the continuous components. Second, for the hybrid \gls{nb}--normal model with component-specific parameters, we show that this Jacobian is block diagonal with closed-form blocks, so each component can be computed on its own. Third, for components that are \gls{nb} in every observation and so contribute no Jacobian, we give an exact finite-sample representation of the discrete \gls{gfd} and a \gls{mh} sampler that computes it at the application's scale, where the single-site Gibbs sampler of \citet{hannig_iyer_wang_2007} does not mix (Section~\ref{sec:method:discrete} and \discretecompapp). Fourth, building on \citet{borgert_hannig_2026}, we prove a \gls{bvm} theorem for the parameters of components that have at least one normal observation, under the model conditions stated in Section~\ref{sec:asymptotics}.

The remainder of the paper is organized as follows. Section~\ref{sec:method} sets up the hybrid \gls{nb}--normal model and its generalized fiducial formulation, including the discrete construction for components that are \gls{nb} in every observation. Section~\ref{sec:theory} derives the joint fiducial density and the continuous Jacobian. Section~\ref{sec:asymptotics} proves the \gls{bvm} theorem. Section~\ref{sec:sims} presents the simulation study, Section~\ref{sec:app} applies the method to professional women's soccer data, and Section~\ref{sec:discuss} concludes with limitations and extensions.

\section{The Hybrid Model}
\label{sec:method}

This section specifies the hybrid model and its structural-equation representation. We call it a working model because it is the model under which inference is carried out, not a claim about how the data were generated: the exact \gls{nb} structural equation is kept for components with small \gls{nb} shape, and a normal structural equation with the same mean and variance replaces it where the shape is large. The two are combined within a single generalized fiducial construction \citep{hannig_et_al_2016}.

Let $Z_i=(X_i,t_i)$ denote the design pair for observation $i\in\{1,\ldots,n\}$, where $X_i\in\mathbb R^K$ is a covariate vector and $t_i>0$ is a known exposure. Under the random-design formulation used in the asymptotic theory, $Z_1,\ldots,Z_n$ are \gls{iid} draws from a distribution $Q$. The response for observation $i$ is the count vector $\mathbf y_i=(y_{i1},\ldots,y_{id})^\top\in\mathbb N_0^d$. We collect responses and covariates into $\mathbf y\in\mathbb N_0^{n\times d}$ and $\mathbf X\in\mathbb R^{n\times K}$. For each observation $i$, let $D_i$ denote the set of components represented by the exact \gls{nb} model and $C_i$ the set represented by the normal approximation, so that $D_i\cup C_i=\{1,\ldots,d\}$ and $D_i\cap C_i=\varnothing$ (the oracle partition is defined below), giving the observation-level sample space
$$\left(\prod_{j\in D_i}\mathbb N_0\right) \times \left(\prod_{j\in C_i}\mathbb R\right).$$

Conditional on the realized design and this partition, the working model treats response vectors as independent across observations and scalar component responses as conditionally independent within each observation. It does not impose the fixed-sum constraint that compositional count vectors satisfy \citep{aitchison_1982}. Componentwise modeling of such counts is motivated by the fact that independent Poisson variables, conditional on their sum, follow a multinomial distribution \citep[Sec.~5.3.1]{mccullagh_nelder_1989}. That relation does not make the present working model equivalent to a compositional likelihood, as Section~\ref{sec:discuss:limitations} discusses.

Conditional on $Z_i$, the component-specific \gls{nb} working model arises from a Poisson--Gamma hierarchy \citep[pp.~117--119]{cameron_trivedi_2013}:
$$Y_{ij}\mid\lambda_{ij}\sim\operatorname{Poisson}(\lambda_{ij}),
\qquad
\lambda_{ij}\sim\operatorname{Gamma}\big(\operatorname{shape}=\alpha_{ij}t_i,\ \operatorname{rate}=\nu_j\big),$$
with $\alpha_{ij}=\exp(X_i^\top\beta_j)$, component-specific regression coefficients $\beta_j\in\mathbb R^K$ and dispersion $\nu_j>0$. Marginalizing over $\lambda_{ij}$ gives $Y_{ij}\sim\operatorname{NB}\big(r_{ij},\ \nu_j/(1+\nu_j)\big)$ with shape $r_{ij}=\alpha_{ij}t_i$, which equals $\mu_{ij}\nu_j$, the mean times the dispersion, and
$$\mu_{ij}=\mathbb E_{\btheta}[Y_{ij}\mid Z_i]=\frac{\alpha_{ij}t_i}{\nu_j},
\qquad
\sigma_{ij}^2=\operatorname{Var}_{\btheta}(Y_{ij}\mid Z_i)=\frac{\alpha_{ij}t_i(\nu_j+1)}{\nu_j^2}
=\mu_{ij}\big(1+\nu_j^{-1}\big),$$
so smaller $\nu_j$ corresponds to greater overdispersion and the Poisson limit is recovered as $\nu_j\to\infty$. The derivation is given in \nbderivationapp. We write
$$\btheta=(\btheta_1^\top,\ldots,\btheta_d^\top)^\top,
\qquad
\btheta_j=(\beta_j^\top,\nu_j)^\top,
\qquad
q=d(K+1).$$
The dispersion $\nu_j$ is shared across observations within component $j$ but may vary across components.

The \gls{nb} distribution is well approximated by a normal with the same mean and variance when its shape parameter is large. Unlike the Poisson distribution, it does not approach normality as its mean grows at fixed shape \citep[p.~199]{hilbe_2011}. Its skewness is $(2+\nu_j)/\{r_{ij}(1+\nu_j)\}^{1/2}$. In the overdispersed regime relevant here, where $\nu_j$ is small, this is approximately $2/r_{ij}^{1/2}$ and is essentially free of the mean, whereas in the Poisson limit, as $\nu_j\to\infty$, it reduces to $\mu_{ij}^{-1/2}$. The accuracy of the approximation is governed by $r_{ij}$ rather than by the mean, and Section~\ref{sec:sims:dispersion} confirms this in terms of coverage. We therefore define the partition from the true shape $r_{ij}(\btheta_0)=\exp(X_i^\top\beta_{j,0})\,t_i$, which depends on the regression coefficients and exposure but not on the dispersion: given a fixed threshold $\omega>0$,
$$C_i=\{j:r_{ij}(\btheta_0)\geq\omega\},
\qquad
D_i=\{j:r_{ij}(\btheta_0)<\omega\}.$$
We refer to this as the oracle partition. Because $r_{ij}=\mu_{ij}\nu_j$, a threshold $c$ on the conditional mean corresponds to a shape threshold $\omega=c\,\nu_{j,0}$ that varies with the component, so a mean rule coincides with the shape rule only when the dispersion is common to all components.

Under the random-design formulation the oracle partition is random through $Z_i$. Conditional on the realized design and $\btheta_0$ it is fixed, does not depend on the realized response, and is not recomputed as the candidate parameter $\btheta$ varies in the likelihood, score, Hessian, Jacobian, or local asymptotic expansion. The partition is observation-specific: the same component may belong to $D_i$ for one observation and to $C_i$ for another. The component-specific hybrid working model is
$$Y_{ij}\mid Z_i
\sim
\begin{cases}
\operatorname{NB}\big(r_{ij}, \nu_j/(1+\nu_j)\big), & j\in D_i,\\[6pt]
\mathcal N\big(\mu_{ij},\ \sigma_{ij}^2\big), & j\in C_i.
\end{cases}$$

The structural-equation representation follows \citet{hannig_et_al_2016}. For a continuous contribution,
$$Y_{ij}=G_{ij}^{(C)}(U_{ij}^{C},\btheta_j)=\mu_{ij}(\btheta_j)+\sigma_{ij}(\btheta_j)\,\Phi^{-1}(U_{ij}^{C}),
\qquad U_{ij}^{C}\sim\operatorname{Uniform}(0,1),\ j\in C_i,$$
where $\Phi$ is the standard normal distribution function. For a discrete contribution, we use the inverse \gls{nb} structural equation of the generalized fiducial construction for discrete distributions \citep{hannig_2013,hannig_et_al_2016},
$$Y_{ij}=G_{ij}^{(D)}(U_{ij}^{D},\btheta_j)=F_{\mathrm{NB}}^{-1}\big(U_{ij}^{D};\,r_{ij},\,\nu_j/(1+\nu_j)\big),
\qquad U_{ij}^{D}\sim\operatorname{Uniform}(0,1),\ j\in D_i,$$
where $F_{\mathrm{NB}}^{-1}$ is the generalized inverse of the \gls{nb} distribution function, so that $G_{ij}^{(D)}(U_{ij}^{D},\btheta_j)=y_{ij}$ if and only if $F_{\mathrm{NB}}(y_{ij}-1;\btheta_j)<U_{ij}^{D}\leq F_{\mathrm{NB}}(y_{ij};\btheta_j)$, with $F_{\mathrm{NB}}(-1;\btheta_j)=0$. Collecting the auxiliary variables gives $\mathbf Y_C=G_C(\mathbf U_C,\btheta)$ and $\mathbf Y_D=G_D(\mathbf U_D,\btheta)$, where $\mathbf U_C$ and $\mathbf U_D$ are independent vectors of \gls{iid} $\operatorname{Uniform}(0,1)$ variables, held fixed when the continuous structural equations are differentiated with respect to $\btheta$. 

In practice $\btheta_0$ is unknown, so the oracle partition must be estimated, and an estimated shape involves the dispersion. The essential requirement is that a component not be assigned by its own realized response, which would let the response choose its own model and bias the working likelihood. A rule based on an aggregate estimator whose sampling variability vanishes converges instead to a fixed partition determined by the design alone, although not necessarily to the oracle one. The asymptotic results of Section~\ref{sec:asymptotics} are stated for the oracle partition and do not account for the variability of an estimated rule. Section~\ref{sec:sims:partition} compares candidate rules by simulation and Section~\ref{sec:app:fit} describes the rule used in the application.

\subsection{Computing the Discrete Construction}
\label{sec:method:discrete}

For a component $j$ with $j\in D_i$ for every observation, $\btheta_j$ appears in no continuous structural equation and Theorem~\ref{thm:jointdensity} below does not apply. Such a component is handled by the discrete construction of \citet{hannig_2013}. Fix one, suppress its index, and restrict $\btheta$ to a bounded set $\Theta$ given in \discretecompapp. Let $\mathbf U^\star$ be a vector of $n$ independent uniforms, drawn independently of the data. For a value $\mathbf u$ of $\mathbf U^\star$, let $\mathcal Q(\mathbf y,\mathbf u)=\{\btheta\in\Theta: G_D(\mathbf u,\btheta)=\mathbf y\}$ be the set of parameters that reproduce every observed count from $\mathbf u$: each observation $i$ restricts $\btheta$ to the values that reproduce $y_i$ from $u_i$, and $\mathcal Q(\mathbf y,\mathbf u)$ is where all $n$ restrictions hold at once. For most $\mathbf u$ no such $\btheta$ exists and the set is empty. When it is nonempty, it is a region rather than a point, since each count is compatible with a range of parameter values. The discrete \gls{gfd} is the distribution of the point produced by the following procedure: draw $\mathbf U^\star$, keep it only if $\mathcal Q(\mathbf y,\mathbf U^\star)$ is nonempty, and pick one point from that set by a rule $V$ \citep[eq.~(2.2)]{hannig_2013}. The rule $V$ is a choice. For i.i.d.\ data discretized by a fixed finite partition, Theorem~4.1 of \citet{hannig_2013} shows that it is asymptotically immaterial. At fixed $\nu$, each observation with $y_i>0$ bounds $X_i^\top\beta$ from both sides and each with $y_i=0$ bounds it from above only, so $\mathcal Q(\mathbf y,\mathbf u)$ is a family of polytopes in $\beta$, one for each feasible value of $\log\nu$. Our rule $V$ draws $\log\nu$ uniformly on its feasible range and then $\beta$ uniformly on the polytope at that $\log\nu$. We write $v_{\mathbf u}$ for the density of this selection.

Let $\mathcal B(\btheta)=\{\mathbf u: G_D(\mathbf u,\btheta)=\mathbf y\}$ be the set of uniforms that reproduce $\mathbf y$ under $\btheta$. Then $\btheta\in\mathcal Q(\mathbf y,\mathbf u)$ if and only if $\mathbf u\in\mathcal B(\btheta)$, and since $G^{(D)}_{ij}(u_i,\btheta)=y_i$ exactly when $u_i$ lies in the cell $\big(F_{\mathrm{NB}}(y_i-1;\btheta),F_{\mathrm{NB}}(y_i;\btheta)\big]$, $\mathcal B(\btheta)$ is a product of these cells, a box with volume $L(\mathbf y\mid\btheta)$.

\begin{lemma}[Discrete Generalized Fiducial Density]
\label{lem:discrete}
If $V$ has a density $v_{\mathbf u}$ on $\mathcal Q(\mathbf y,\mathbf u)$ for almost every $\mathbf u$ with $\mathcal Q(\mathbf y,\mathbf u)\neq\varnothing$, the discrete \gls{gfd} has density
$$r(\btheta\mid\mathbf y)\;\propto\; L(\mathbf y\mid\btheta)\,\mathbb E_{\btheta}\big[v_{\mathbf U}(\btheta)\big],\qquad \mathbf U\sim\operatorname{Uniform}(\mathcal B(\btheta)),$$
and is the $\btheta$-margin of $p(\btheta,\mathbf u)\propto v_{\mathbf u}(\btheta)\,\mathbf 1\{\mathbf u\in\mathcal B(\btheta)\}$, whose $\mathbf u$-margin is uniform on 
$\{\mathbf u:\mathcal Q(\mathbf y,\mathbf u)\neq\varnothing\}.$
\end{lemma}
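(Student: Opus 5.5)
The proof is a direct change-of-order computation on the joint law of $(\mathbf U^\star,\btheta)$ produced by the sampling procedure. First, the accepted auxiliary vector $\mathbf U^\star$ has the law of a uniform on $[0,1]^n$ conditioned on the event $A=\{\mathbf u:\mathcal Q(\mathbf y,\mathbf u)\neq\varnothing\}$. Its density is therefore $\mathbf 1\{\mathbf u\in A\}/\lambda(A)$, which proves the stated $\mathbf u$-margin once we check that the joint density integrates correctly. Given $\mathbf U^\star=\mathbf u\in A$, the rule $V$ returns $\btheta$ with density $v_{\mathbf u}(\btheta)$ supported on $\mathcal Q(\mathbf y,\mathbf u)$. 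The joint density of the pair is thus
$$p(\btheta,\mathbf u)=\lambda(A)^{-1}\,v_{\mathbf u}(\btheta)\,\mathbf 1\{\mathbf u\in A\}\,\mathbf 1\{\btheta\in\mathcal Q(\mathbf y,\mathbf u)\}.$$
By the equivalence $\btheta\in\mathcal Q(\mathbf y,\mathbf u)\iff\mathbf u\in\mathcal B(\btheta)$ noted before the lemma, the last indicator equals $\mathbf 1\{\mathbf u\in\mathcal B(\btheta)\}$. Moreover, $\mathbf u\in\mathcal B(\btheta)$ forces $\mathcal Q(\mathbf y,\mathbf u)\ni\btheta$, so $\mathbf u\in A$ and the indicator of $A$ is redundant. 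This gives $p(\btheta,\mathbf u)\propto v_{\mathbf u}(\btheta)\mathbf 1\{\mathbf u\in\mathcal B(\btheta)\}$. Its $\mathbf u$-margin is $\mathbf 1\{\mathbf u\in A\}\int v_{\mathbf u}=\mathbf 1\{\mathbf u\in A\}$, which is uniform on $A$, as claimed.

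Next, for the $\btheta$-margin, integrate $\mathbf u$ out by Fubini/Tonelli, since everything is nonnegative:
$$r(\btheta\mid\mathbf y)\propto\int_{\mathcal B(\btheta)}v_{\mathbf u}(\btheta)\,d\mathbf u=\lambda(\mathcal B(\btheta))\,\mathbb E\big[v_{\mathbf U}(\btheta)\big],\qquad\mathbf U\sim\operatorname{Uniform}(\mathcal B(\btheta)).$$
Finally, use that $\mathcal B(\btheta)$ is the product of the cells $\big(F_{\mathrm{NB}}(y_i-1;\btheta),F_{\mathrm{NB}}(y_i;\btheta)\big]$. Its Lebesgue volume is therefore $\prod_i P_{\btheta}(Y_i=y_i)=L(\mathbf y\mid\btheta)$.

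The main obstacle is measurability and the null-set qualifier. We must verify that $(\btheta,\mathbf u)\mapsto v_{\mathbf u}(\btheta)\mathbf 1\{\mathbf u\in\mathcal B(\btheta)\}$ is jointly measurable so that Fubini applies. This should follow because the cell endpoints are continuous in $\btheta$ and the feasible-range and polytope description makes $v_{\mathbf u}$ measurable. We must also check that the exceptional $\mathbf u$ where $V$ lacks a density form a null set, so they do not affect either margin. Finally, we need $\lambda(A)>0$ for the conditioning to be well defined. This holds whenever some $\btheta\in\Theta$ has $L(\mathbf y\mid\btheta)>0$, since then $A\supseteq\mathcal B(\btheta)$ has positive measure. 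Boundedness of $\Theta$ is used here to keep the normalizing constant finite.
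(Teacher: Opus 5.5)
Your proposal is correct and follows the same route as the paper's change-of-order-of-integration argument: you write the joint law of the accepted uniforms and the selected point as $\lambda(A)^{-1}v_{\mathbf u}(\btheta)\,\mathbf 1\{\mathbf u\in\mathcal B(\btheta)\}$ via $\btheta\in\mathcal Q(\mathbf y,\mathbf u)\iff\mathbf u\in\mathcal B(\btheta)$, then integrate out $\mathbf u$ over the box $\mathcal B(\btheta)$, whose volume is $L(\mathbf y\mid\btheta)$. One small correction to your closing remark: the normalizing constant $\lambda(A)\le 1$ is finite without any assumption on $\Theta$, and $\lambda(A)>0$ holds automatically because the negative binomial likelihood is strictly positive, so boundedness of $\Theta$ is not needed there; it is what makes the uniform selection rule $V$, and hence the density $v_{\mathbf u}$ assumed in the lemma, well defined.
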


The discrete \gls{gfd} is thus the likelihood times a correction. The likelihood is the volume of auxiliary values consistent with $\btheta$, as in Theorem~\ref{thm:jointdensity}. The correction $\mathbb E_{\btheta}[v_{\mathbf U}(\btheta)]$, the expected reciprocal size of the consistent set, plays the part of the Jacobian. If it were constant in $\btheta$, the discrete \gls{gfd} would equal the flat-prior posterior on $(\beta,\log\nu)$. The proof is a change in the order of integration (see \discretecompapp). The joint density also yields a sampler. Updating one uniform at a time \citep{hannig_iyer_wang_2007} mixes too slowly to be usable at $n=216$: each update is confined to a set of diameter $O_P(n^{-1})$ \citep[eq.~(B.5)]{hannig_2013}, while the target has spread $O(n^{-1/2})$. We instead run \gls{mh} on $\btheta$ and the uniforms together, proposing a new $\btheta$ and carrying each uniform along at its relative position within its cell, an idea used in a binomial setting by \citet[App.~D]{murph_hannig_williams_2023}. Both samplers agree with exact rejection draws on a small problem. The \gls{mh} sampler, its validation and its cost are given in \discretecompapp.

\section{Derivation of the Fiducial Density}
\label{sec:theory}

In this section we derive the joint generalized fiducial density for the mixed discrete--continuous construction. We write $r(\btheta\mid\mathbf y)$ for the density and reserve \gls{gfd} for the distribution itself. In this section and Section~\ref{sec:asymptotics} we assume that every component is \emph{active}: $P_Q(j\in C_i)>0$ for every $j\in\{1,\ldots,d\}$, so that each component receives the normal approximation for a positive fraction of observations. Entirely discrete components are handled by the construction of Section~\ref{sec:method:discrete} and are excluded from $d$ and from $q=d(K+1)$. We refer to this as the active-component convention. It is what makes the full-column-rank condition of Theorem~\ref{thm:jointdensity} attainable: the $K+1$ parameters of an entirely discrete component would appear in no continuous structural equation, and the continuous Jacobian matrix $J_C$ defined below would have a nontrivial null space.

\subsection{Joint Fiducial Density}
\label{sec:jointFiducial}

Throughout this subsection we condition on the realized design $\mathbf Z=(Z_1,\ldots,Z_n)$ and on the oracle partition determined at $\btheta_0$, which is held fixed as $\btheta$ varies. Stacking the $nd$ scalar responses with the normal-approximated ones first gives
$$\mathbf y=\begin{pmatrix}\mathbf y_C\\ \mathbf y_D\end{pmatrix},
\qquad \mathbf y_C\in\mathbb R^{n_C},\qquad \mathbf y_D\in\mathbb N_0^{n_D},
\qquad n_C+n_D=nd,$$
where $n_C=\sum_i|C_i|$ and $n_D=\sum_i|D_i|$, with the corresponding structural-equation representation
$$\mathbf Y=\bigl(G_C(\mathbf U_C,\btheta),\,G_D(\mathbf U_D,\btheta)\bigr),$$
where $\mathbf U_C$ and $\mathbf U_D$ are the independent vectors of uniforms introduced in Section~\ref{sec:method}.

\begin{theorem}[Joint Generalized Fiducial Density]
\label{thm:jointdensity}
Suppose that
\begin{enumerate}[(i)]
  \item the design pairs $Z_i=(X_i,t_i)$, $i=1,\ldots,n$, satisfy Condition~(A) of Section~\ref{sec:modelconditions}, and the parameter space $\Theta$ satisfies Condition~(D);
  \item conditional on $\mathbf Z$ and the oracle partition, the observation vectors $\mathbf Y_1,\ldots,\mathbf Y_n$ are independent under the hybrid working model of Section~\ref{sec:method}, and the scalar responses are conditionally independent within each observation;
  \item the continuous structural-equation Jacobian
  $$J_C(\mathbf u_C,\btheta)=\frac{\partial G_C(\mathbf u_C,\btheta)}{\partial\btheta}$$
  has rank $q$ almost surely in $\mathbf U_C$ for every $\btheta\in\Theta$, and its evaluation at the auxiliary values that reproduce the observed $\mathbf y_C$,
  $$J_C(\btheta):=J_C(\mathbf u_C,\btheta)\big|_{\mathbf u_C=G_C^{-1}(\mathbf y_C,\btheta)},$$
  has full column rank $q$ for every $\btheta\in\Theta$; and
  \item for every $\mathbf u_C$ in $\mathcal U_C(\mathbf y_C):=\{G_C^{-1}(\mathbf y_C,\btheta):\btheta\in\Theta\}$, the map $\btheta'\mapsto G_C(\mathbf u_C,\btheta')$ is injective on $\Theta$.
\end{enumerate}
Then the generalized fiducial density of $\btheta$ is
$$r(\btheta\mid\mathbf y,\mathbf Z)\propto L(\mathbf y\mid\btheta,\mathbf Z)\,J(\btheta,\mathbf y_C,\mathbf Z),$$
where $L(\mathbf y\mid\btheta,\mathbf Z)$ is the conditional working likelihood of the hybrid model and
$$J(\btheta,\mathbf y_C,\mathbf Z)=\sqrt{\det\!\left(n^{-1}J_C(\btheta)^\top J_C(\btheta)\right)}$$
is the Jacobian induced by the continuous structural equations.
\end{theorem}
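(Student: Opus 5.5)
The plan is to start from the definition of the generalized fiducial distribution in \citet{hannig_et_al_2016} as a weak limit. We perturb the data equation to $\|\mathbf y_C-G_C(\mathbf U_C^\star,\btheta)\|\le\epsilon$ together with $G_D(\mathbf U_D^\star,\btheta)=\mathbf y_D$, minimize over $\btheta$ as in their construction, and let $\epsilon\downarrow 0$. Because $\mathbf U_C$ and $\mathbf U_D$ are independent, the event that the discrete equations hold factors out of the continuous one. By independence across observations and within observations (assumption (ii)), the set of $\mathbf u_D$ that reproduce $\mathbf y_D$ under $\btheta$ is a product of cells $(F_{\mathrm{NB}}(y_{ij}-1;\btheta_j),F_{\mathrm{NB}}(y_{ij};\btheta_j)]$. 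This is exactly the box $\mathcal B(\btheta)$ of Section~\ref{sec:method:discrete}, now over all discrete cells, and its volume is $L_D(\mathbf y_D\mid\btheta,\mathbf Z)=\prod_i\prod_{j\in D_i}p_{\mathrm{NB}}(y_{ij};\btheta_j)$. The key point is that the continuous equations pin down $\btheta$ uniquely, by assumptions (iii) and (iv). So, unlike in Lemma~\ref{lem:discrete}, no selection rule $V$ is needed. The discrete equations then act only as an indicator constraint whose $\mathbf U_D$-probability is $L_D$.

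The continuous part is the standard smooth case of Theorem~1 of \citet{hannig_et_al_2016}, which I would invoke after checking its hypotheses. First, rewrite the normal structural equation as $\mathbf Y_C=\mu(\btheta)+\operatorname{diag}(\sigma(\btheta))\mathbf Z_C$ with $\mathbf Z_C=\Phi^{-1}(\mathbf U_C)$ standard normal. Then $G_C^{-1}(\mathbf y_C,\btheta)$ is the explicit vector with entries $\Phi((y_{ij}-\mu_{ij})/\sigma_{ij})$, and the density of $\mathbf Y_C$ is $L_C(\mathbf y_C\mid\btheta)=\prod_{j\in C_i}\sigma_{ij}^{-1}\phi(\cdot)$. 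Under the perturbed inverse problem, the $\epsilon$-tube around the $q$-dimensional manifold $\{G_C(\mathbf u_C,\btheta):\btheta\in\Theta\}$ contains $\mathbf y_C$ with probability of order $\epsilon^{n_C-q}$. The local volume factor comes from projecting onto the tangent space spanned by the columns of $J_C$, which gives $\sqrt{\det(J_C^\top J_C)}$ evaluated at $\mathbf u_C=G_C^{-1}(\mathbf y_C,\btheta)$. Full column rank (iii) makes this factor nonzero and the manifold locally a graph. Injectivity (iv) prevents two parameter values from producing the same $\mathbf u_C$, so the preimage is a single point and no overlapping-sheet terms occur. Combining the pieces gives a limiting density proportional to $L_C(\mathbf y_C\mid\btheta)L_D(\mathbf y_D\mid\btheta)\sqrt{\det(J_C^\top J_C)}$. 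The factor $n^{-q}$ inside the determinant is constant in $\btheta$, so it can be inserted freely. Conditions (A) and (D) provide compactness and boundedness of the design, so that the normalizing integral is finite and dominated convergence can be applied.

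The main obstacle is the interchange of the limit $\epsilon\to0$ with the discrete indicator. The discrete constraint $\mathbf 1\{\mathbf U_D\in\mathcal B(\btheta)\}$ is discontinuous in $\btheta$, whereas Hannig et al.'s argument uses continuity of the integrand near the minimizer. To handle this, I would condition on $\mathbf U_D$ first. For fixed $\mathbf u_D$, the continuous argument applies to the constrained parameter set $\Theta\cap\{\btheta:\mathbf u_D\in\mathcal B(\btheta)\}$, which is an intersection of regions with boundaries of measure zero because $F_{\mathrm{NB}}$ is smooth in $\btheta$. This yields a conditional density proportional to $L_C J\,\mathbf 1\{\mathbf u_D\in\mathcal B(\btheta)\}$. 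Integrating over $\mathbf u_D$ by Fubini then replaces the indicator with its probability $L_D(\mathbf y_D\mid\btheta)$, which is the same change of integration order used in Lemma~\ref{lem:discrete}. Uniform integrability, obtained from boundedness of $J$ on $\Theta$ under (D), justifies passing the limit through the outer integral.
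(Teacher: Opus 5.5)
Your argument is sound in outline, but it orders the limit differently from the paper.

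\textbf{The paper's route.} The paper keeps $\Theta$ fixed and works from the continuous side. Condition (iv) with compactness excludes distant parameters that have a small continuous residual. Condition (iii) with a uniform Taylor expansion confines the minimizer to an $O(\epsilon)$ neighbourhood of the reference value. Over that neighbourhood the indicator $\mathbf 1\{\mathbf U_D\in\mathcal B(\btheta)\}$ does not change unless some $U^D_{ij}$ lies within $O(\epsilon)$ of a cell endpoint, an event of probability $O(\epsilon)$. The discrete constraint is then integrated over $\mathbf U_D$ at a fixed $\btheta$ and factors off as $L_D$ by independence of $\mathbf U_C$ and $\mathbf U_D$. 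The continuous result is only ever used on the fixed set $\Theta$.

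\textbf{Your route.} You condition on $\mathbf U_D=\mathbf u_D$ and run the continuous limit on $\Theta_{\mathbf u_D}=\{\btheta\in\Theta:\mathbf u_D\in\mathcal B(\btheta)\}$. You then finish with the same change of integration order as in Lemma~\ref{lem:discrete}. This makes the form of the answer transparent and treats the continuous theory as a black box. The cost is that you need that black box in a form Theorem~1 of \citet{hannig_et_al_2016} does not state.

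\textbf{What you must supply.} Under the conditioning event $E_\epsilon$, the law of $\mathbf U_D$ is reweighted by $P(E_\epsilon\mid\mathbf u_D)$. Integrating a conditional density ``proportional to'' $L_CJ\,\mathbf 1\{\mathbf u_D\in\mathcal B(\btheta)\}$ against the uniform law is therefore legitimate only because
$\epsilon^{-(n_C-q)}P(E_\epsilon,\btheta^\star_\epsilon\in A\mid\mathbf u_D)\to c\int_{A\cap\Theta_{\mathbf u_D}}L_CJ\,d\btheta$,
with a ball-volume constant $c$ that does not depend on $\mathbf u_D$. Integrating normalized conditional densities would give the wrong answer. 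This unnormalized rate is implicit in the proof of that theorem, not its statement. You must re-derive it on a non-open domain whose boundary moves with $\mathbf u_D$.

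\textbf{The remaining pieces.} Your measure-zero remark controls the boundary layer of $\Theta_{\mathbf u_D}$ pointwise in $\mathbf u_D$. Integrated over $\mathbf u_D$, that layer is exactly the paper's $O(\epsilon)$ cell-endpoint event, seen from the other side of Fubini. The domination for the outer limit does not come from boundedness of $J$. It comes from
$P(E_\epsilon\mid\mathbf u_D)\le P\{\exists\btheta\in\Theta:\|\mathbf y_C-G_C(\mathbf U_C,\btheta)\|\le\epsilon\}=O(\epsilon^{n_C-q})$,
which holds uniformly in $\mathbf u_D$.
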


\begin{remark}
The injectivity in (iv) does not follow from the rank condition in (iii). If the design includes an intercept, shifting the intercept by $c$ and multiplying $\nu_j$ by $e^{c}$ leaves every $\mu_{ij}$ unchanged, so when every continuous auxiliary variable sits at its median, $\Phi^{-1}(U^C_{ij})=0$, two different parameters produce the same $\mathbf Y_C$. This configuration has probability zero, and the requirement holds almost surely under the working model for every component with more than $2(K+1)$ normal-approximated observations, as shown in the remark in \jointfiducialproofapp.
\end{remark}

\begin{proof}[Proof sketch]
A full derivation appears in \jointfiducialproofapp. Because the design distribution $Q$ does not depend on $\btheta$, it contributes to neither the likelihood ratio nor the Jacobian.

For the discrete components, the responses lie on the integer lattice, where distinct points are at least one unit apart, so a tolerance $\epsilon<1$ forces exact matching, $G_D(\mathbf U_D,\btheta)=\mathbf y_D$. The set of $\btheta$ that satisfy this constraint depends on $\mathbf U_D$, so one cannot simply replace the constraint by its probability at a fixed $\btheta$. The supplementary proof first uses injectivity (iv) and compactness to rule out parameters far from the reference value that nonetheless produce a small continuous residual. It then uses the rank condition (iii) and a uniform Taylor expansion to show that the conditioning event confines $\btheta$ to an $O(\epsilon)$ neighborhood. Over that neighborhood the discrete constraint is unchanged except when some $U^D_{ij}$ lies within $O(\epsilon)$ of a cell boundary, an event of probability $O(\epsilon)$ that vanishes in the limit. Integrating the constraint over $\mathbf U_D$ at fixed $\btheta$ then yields the discrete likelihood factor $L_D(\mathbf y_D\mid\btheta,\mathbf Z)$ and no Jacobian.

For the continuous components, the rank condition and the regularity of the continuous structural equations give the standard contribution $L_C(\mathbf y_C\mid\btheta,\mathbf Z)\,J(\btheta,\mathbf y_C,\mathbf Z)$, with $J$ as in the statement.

Combining the two, using the independence of $\mathbf U_C$ and $\mathbf U_D$ and the conditional independence of the responses,
$$r(\btheta\mid\mathbf y,\mathbf Z)\propto L_D(\mathbf y_D\mid\btheta,\mathbf Z)\,L_C(\mathbf y_C\mid\btheta,\mathbf Z)\,J(\btheta,\mathbf y_C,\mathbf Z)=L(\mathbf y\mid\btheta,\mathbf Z)\,J(\btheta,\mathbf y_C,\mathbf Z).$$
\end{proof}

The discrete observations therefore enter only through the likelihood, and the Jacobian depends only on the continuous structural equations.

\subsection{Hybrid Likelihood}
\label{sec:likelihood}

Conditional on the covariates, exposures and oracle partition, and with the conditioning on $\mathbf Z$ suppressed from the notation, the working hybrid likelihood is
$$L(\mathbf y\mid\btheta)
= \prod_{i=1}^n
\left[ \prod_{j\in D_i} f_{\mathrm{NB}}(y_{ij}\mid\btheta_j) \right]
\left[ \prod_{j\in C_i} f_{\mathrm N}(y_{ij}\mid\btheta_j) \right] 
= L_D(\mathbf y_D\mid\btheta)\,L_C(\mathbf y_C\mid\btheta),$$
where $f_{\mathrm{NB}}(\cdot\mid\btheta_j)$ is the \gls{nb} \gls{pmf} with shape $r_{ij}$ and success probability $\nu_j/(1+\nu_j)$, as derived in \nbderivationapp, and $f_{\mathrm N}(\cdot\mid\btheta_j)$ is the normal density with mean $\mu_{ij}$ and variance $\sigma_{ij}^2$, all as in Section~\ref{sec:method}. Under the conditions of Theorem~\ref{thm:jointdensity}, substituting this decomposition into the theorem gives
$$r(\btheta\mid\mathbf y)
\propto
L_D(\mathbf y_D\mid\btheta)\,
L_C(\mathbf y_C\mid\btheta)\,
J(\btheta,\mathbf y_C).$$
The explicit \gls{nb} and normal factors, and the working log-likelihood obtained from them, are given in \scorehessianapp, where they are used to derive the score and Hessian. With the likelihood specified, what remains is the Jacobian factor $J(\btheta,\mathbf y_C)$.

\subsection{Jacobian Structure}
\label{sec:jacobian}

We now derive the explicit form of $J(\btheta,\mathbf y_C)$ under the conditions of Theorem~\ref{thm:jointdensity}, with $J_C(\btheta)$ and $J(\btheta,\mathbf y_C)$ as defined there. Write $\xi_{ij}:=\Phi^{-1}(U_{ij}^{C})\sim\mathcal N(0,1)$, so that the continuous structural equation of Section~\ref{sec:method} reads
$$Y_{ij}=\mu_{ij}(\btheta_j)+\sigma_{ij}(\btheta_j)\,\xi_{ij},\qquad j\in C_i,$$
with $\mu_{ij}(\btheta_j)$ and $\sigma_{ij}(\btheta_j)$ as given there. Each continuous structural equation involves only its own component's parameter $\btheta_j=(\beta_j^\top,\nu_j)^\top$, so the derivative of the $(i,j)$ equation with respect to $\btheta_{j'}$ is zero for $j'\neq j$. This gives $J_C(\btheta)$ a block structure, which we describe next.

\begin{proposition}[Block-Diagonal Jacobian]
\label{prop:jacobian}
Under the hybrid model of Section~\ref{sec:method} and the conditions of Theorem~\ref{thm:jointdensity}, order the continuous contributions by component. Then $J_C(\btheta)$ is block diagonal with rectangular blocks,
$$J_C(\btheta)=\operatorname{diag}\{J_1(\btheta),\ldots,J_d(\btheta)\},$$
where $J_j(\btheta)$ is the $n_j^C\times(K+1)$ matrix of derivatives of the continuous equations of component $j$ with respect to $\btheta_j=(\beta_j^\top,\nu_j)^\top$, and $n_j^C=|\{i:j\in C_i\}|$ is the number of observations in which component $j$ is normal-approximated. The full-column-rank condition implies $\operatorname{rank}\{J_j(\btheta)\}=K+1$, and hence $n_j^C\geq K+1$, for every active component.

Define weight vectors $\mathbf w_j^{(\beta)},\mathbf w_j^{(\nu)}\in\mathbb R^{n_j^C}$, indexed by the observations $i$ with $j\in C_i$, by
$$w_{ij}^{(\beta)}=\tfrac12(y_{ij}+\mu_{ij}),
\qquad
w_{ij}^{(\nu)}=-\frac{\nu_j\mu_{ij}+(\nu_j+2)\,y_{ij}}{2\nu_j(\nu_j+1)}.$$
Then
$$J_j(\btheta)=
\begin{bmatrix}
W_j^{(\beta)}X^{(j)} & \mathbf w_j^{(\nu)}
\end{bmatrix},$$
where $W_j^{(\beta)}=\operatorname{diag}(\mathbf w_j^{(\beta)})$ and $X^{(j)}\in\mathbb R^{n_j^C\times K}$ is the submatrix of $\mathbf X$ with the rows $i$ for which $j\in C_i$.
\end{proposition}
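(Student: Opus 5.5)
The plan is a direct computation with three parts: identify the sparsity pattern of $J_C(\btheta)$, compute the nonzero entries by differentiating the continuous structural equation with $\mathbf u_C$ held fixed and then substituting $\mathbf u_C=G_C^{-1}(\mathbf y_C,\btheta)$, and read off the rank consequence.

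\emph{Block structure.} Order the rows of $J_C$ by component $j$, and within a component by the observations $i$ with $j\in C_i$. Order the columns as $(\btheta_1,\ldots,\btheta_d)$. The $(i,j)$ continuous equation $Y_{ij}=\mu_{ij}(\btheta_j)+\sigma_{ij}(\btheta_j)\xi_{ij}$ involves only $\btheta_j$, because $\mu_{ij}$ and $\sigma_{ij}$ depend only on $\beta_j,\nu_j$ and the fixed design. Its derivative with respect to $\btheta_{j'}$ is therefore zero for $j'\neq j$. This gives $J_C=\operatorname{diag}\{J_1,\ldots,J_d\}$ with $J_j$ of size $n_j^C\times(K+1)$. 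The oracle partition is held fixed as $\btheta$ varies, so the row set does not change with $\btheta$.

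\emph{Rank consequence.} The rank of a block-diagonal matrix is the sum of the ranks of its blocks, and $\operatorname{rank}J_j\le K+1$. Condition (iii) of Theorem~\ref{thm:jointdensity} gives $\operatorname{rank}J_C=q=d(K+1)$. Hence every block must attain rank exactly $K+1$, which forces $n_j^C\ge K+1$ for each active component.

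\emph{Entries.} Write $\mu=\alpha t/\nu$ and $\sigma=\{\alpha t(\nu+1)\}^{1/2}/\nu$ with $\alpha=\exp(X_i^\top\beta_j)$.
\begin{itemize}
\item For $\beta_j$: we have $\partial\mu/\partial\beta_j=\mu X_i$ and $\partial\sigma/\partial\beta_j=\tfrac12\sigma X_i$. The row is therefore $X_i^\top(\mu+\tfrac12\sigma\xi)$. Substituting $\sigma\xi=y-\mu$ gives $X_i^\top(y+\mu)/2=w^{(\beta)}_{ij}X_i^\top$, which yields the block $W_j^{(\beta)}X^{(j)}$.
\item For $\nu_j$: we have $\partial\mu/\partial\nu=-\mu/\nu$ and $\partial\log\sigma/\partial\nu=\tfrac{1}{2(\nu+1)}-\tfrac1\nu=-\tfrac{\nu+2}{2\nu(\nu+1)}$. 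The derivative is $-\mu/\nu-\tfrac{\nu+2}{2\nu(\nu+1)}(y-\mu)$. Over the common denominator $2\nu(\nu+1)$ the numerator is $-2\mu(\nu+1)+(\nu+2)\mu-(\nu+2)y=-\nu\mu-(\nu+2)y$, which is exactly $w^{(\nu)}_{ij}$.
\end{itemize}

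\emph{Main obstacle.} The only delicate point is bookkeeping rather than analysis. The derivative must be taken with $\xi_{ij}$ fixed, as in the definition of $J_C(\mathbf u_C,\btheta)$, and only afterwards evaluated at $\xi_{ij}=(y_{ij}-\mu_{ij})/\sigma_{ij}$. The $\nu$-algebra must be checked carefully so that it reproduces the stated closed form. Everything else follows from the sparsity pattern.
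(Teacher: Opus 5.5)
Your proposal is correct and takes the same approach as the paper. The sparsity pattern comes from each continuous equation depending only on $\btheta_j$, and the entries come from differentiating $\mu_{ij}+\sigma_{ij}\xi_{ij}$ with $\xi_{ij}$ fixed and then substituting $\sigma_{ij}\xi_{ij}=y_{ij}-\mu_{ij}$; your $\nu_j$-algebra (numerator $-\nu\mu-(\nu+2)y$) and the rank-additivity argument over diagonal blocks both check out.
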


The proof is in \blockdiagjacobianproofapp.

The block structure factors the Jacobian over components:
$$J(\btheta,\mathbf y_C)
=\sqrt{\det\!\left(n^{-1}J_C(\btheta)^\top J_C(\btheta)\right)}
=\prod_{j=1}^d\sqrt{\det\widehat V_j(\btheta)},
\qquad
\widehat V_j(\btheta):=n^{-1}J_j(\btheta)^\top J_j(\btheta).$$
Each factor depends only on $\btheta_j$, which is what allows the Jacobian to be computed component by component. The scaling by $n^{-1}$ is the one under which $\widehat V_j(\btheta)$ has a finite limit: under the conditions of Section~\ref{sec:asymptotics}, $\widehat V_j(\btheta)\to V_j(\btheta)$ uniformly in probability on a neighborhood of $\btheta_0$, and so
$$J(\btheta,\mathbf y_C)\to\pi(\btheta):=\prod_{j=1}^d\sqrt{\det V_j(\btheta)},$$
the limiting Jacobian density that plays the role of a prior in the \gls{bvm} theorem of Section~\ref{sec:asymptotics}.

\section{Asymptotic Justification}
\label{sec:asymptotics}

We establish a \gls{bvm} theorem for the hybrid \gls{gfd} of Section~\ref{sec:jointFiducial} under the conditions of Theorem~\ref{thm:jointdensity} and the active-component convention of Section~\ref{sec:theory}. We work within the framework of \citet{borgert_hannig_2026}. The model conditions and the verification of their five regularity conditions are given below and in \bvmproofapp.

\subsection{Model Conditions}
\label{sec:modelconditions}

\begin{enumerate}[(A)]

    \item \textbf{(Design regularity)} The design pairs $(X_i,t_i)$ are \gls{iid} draws from a common distribution $Q$ on $\mathbb R^K\times(0,\infty)$. The marginal distribution of $X_i$ is supported on a compact set $\mathcal X\subset\mathbb R^K$. In particular, there exists a constant $C<\infty$ such that $\|X_i\|_2\leq C$ almost surely. No independence between $X_i$ and $t_i$ is required.

    \item \textbf{(Bounded exposure)} There exist constants $0<t_{\min}\leq t_{\max}<\infty$ such that
    $$P(t_{\min}\leq t_i\leq t_{\max})=1.$$

    \item \textbf{(Design richness)} The following design nondegeneracy conditions hold:
    \begin{enumerate}[(i)]
        \item The population second-moment matrix of the covariates is positive definite:
        $$\mathbb E[X_iX_i^\top]\succ0.$$

        \item For each active component $j$, the design restricted to the observations in which it is normal-approximated is nondegenerate:
        $$\mathbb E\!\left[\mathbf1\{j\in C_i\}X_iX_i^\top\right]\succ0.$$
    \end{enumerate}

    \item \textbf{(Compact valid parameter space and interior true parameter)} The parameter space $\Theta$ is a compact subset of
    $$\prod_{j=1}^d\left(\mathbb R^K\times(0,\infty)\right),$$
    has nonempty interior, and satisfies $\btheta_0\in\operatorname{int}(\Theta)$. Consequently, there exist constants $0<\nu_{\min}\leq\nu_{\max}<\infty$ such that
    $$\nu_{\min}\leq\nu_j\leq\nu_{\max}$$
    uniformly over $\btheta\in\Theta$ and $j\in\{1,\ldots,d\}$.

    \item \textbf{(Information positivity)} The population Fisher information matrix
    $$\mathcal I_{\btheta_0} :=\mathbb E_Q[\mathcal I_i(\btheta_0\mid Z_i)],$$
    where $\mathcal I_i(\btheta_0\mid Z_i)$ is the observation-level conditional Fisher information derived in \scorehessianapp, is positive definite.

\end{enumerate}

\noindent Conditions~(A), (B) and~(D) bound the design, the exposures, the component means and variances, and the dispersions uniformly, which the score, Hessian and Jacobian arguments all require. Condition~(C) supplies the design nondegeneracy needed for identifiability, with (C)(ii) required only within the normal-approximated observations of active components. Condition~(E) ensures the limiting Fisher information is nonsingular. The convergence of the averaged Fisher information and of the continuous-block Gram matrices $\widehat V_j(\btheta)$ is established in \bvmproofapp. The following lemma records the uniform positive definiteness of their limits that the Jacobian analysis requires.

\begin{lemma}\label{lem:posdef}
Under Conditions~(A)--(E), for every active component $j$ the limiting matrix $V_j(\btheta)$ of Section~\ref{sec:jacobian} is positive definite for every $\btheta\in\Theta$, and
$$\inf_{\btheta\in\Theta}\lambda_{\min}\{V_j(\btheta)\}>0.$$
\end{lemma}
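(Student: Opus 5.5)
We first write $V_j(\btheta)$ as a population expectation and then show that its quadratic form is strictly positive by conditioning on the design. From Proposition~\ref{prop:jacobian}, the row of $J_j(\btheta)$ for an observation $i$ with $j\in C_i$ is $a_{ij}(\btheta)^\top=\bigl(w_{ij}^{(\beta)}X_i^\top,\;w_{ij}^{(\nu)}\bigr)$. Hence
$$\widehat V_j(\btheta)=n^{-1}\sum_{i=1}^n\mathbf 1\{j\in C_i\}\,a_{ij}(\btheta)a_{ij}(\btheta)^\top .$$
The limit, as established in \bvmproofapp, is
$$V_j(\btheta)=\mathbb E\bigl[\mathbf 1\{j\in C_i\}\,a_{ij}(\btheta)a_{ij}(\btheta)^\top\bigr],$$
with the expectation taken over $Z_i\sim Q$ and over $Y_{ij}\mid Z_i\sim\mathcal N(\mu_{ij}(\btheta_0),\sigma_{ij}^2(\btheta_0))$. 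Conditions~(A), (B) and~(D) bound $\|X_i\|$, $\mu_{ij}$, $\sigma_{ij}$ and $\nu_j$ uniformly over $\Theta$. So every entry is integrable, and $V_j(\btheta)$ is continuous in $\btheta$ by dominated convergence. This holds because $\mu_{ij}(\btheta)$ and $\nu_j$ enter $a_{ij}$ smoothly, and the data enter $a_{ij}$ only linearly, through $y_{ij}$.

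Fix $\btheta\in\Theta$ and a nonzero vector $(b^\top,c)^\top\in\mathbb R^{K+1}$. The linear combination $a_{ij}^\top(b,c)$ is affine in $Y_{ij}$: it equals $A_iY_{ij}+B_i$, where
$$A_i=\tfrac12X_i^\top b-\frac{c(\nu_j+2)}{2\nu_j(\nu_j+1)},\qquad B_i=\mu_{ij}\Bigl(\tfrac12X_i^\top b-\frac{c}{2(\nu_j+1)}\Bigr),$$
with $\mu_{ij}=\mu_{ij}(\btheta)$. Conditionally on $Z_i$,
$$\mathbb E\bigl[(A_iY_{ij}+B_i)^2\mid Z_i\bigr]=A_i^2\sigma_{ij}^2(\btheta_0)+\bigl(A_i\mu_{ij}(\btheta_0)+B_i\bigr)^2 .$$
Since $\sigma_{ij}^2(\btheta_0)>0$, this vanishes only if $A_i=0$ and $B_i=0$. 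Because $\mu_{ij}>0$, that requires both $X_i^\top b=c(\nu_j+2)/\{\nu_j(\nu_j+1)\}$ and $X_i^\top b=c/(\nu_j+1)$. Subtracting the two gives $2c/\{\nu_j(\nu_j+1)\}=0$, so $c=0$, and then $X_i^\top b=0$.

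Therefore, if the quadratic form $(b^\top,c)V_j(\btheta)(b^\top,c)^\top$ were zero, we would have $\mathbf 1\{j\in C_i\}(X_i^\top b)^2=0$ almost surely and $c=0$. This contradicts Condition~(C)(ii) unless $b=0$. Hence $V_j(\btheta)\succ0$ for each $\btheta\in\Theta$.

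For the uniform bound, $\btheta\mapsto\lambda_{\min}\{V_j(\btheta)\}$ is continuous, since eigenvalues are Lipschitz in the matrix entries. It is also strictly positive on the compact set $\Theta$ (Condition~(D)), so it attains a positive minimum.

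The main point needing care is the continuity and finiteness step. It requires the explicit uniform bounds: $\mu_{ij}\le e^{C\max_j\|\beta_j\|}t_{\max}/\nu_{\min}$, $\nu_j\ge\nu_{\min}>0$, and Gaussian second moments of $Y_{ij}$. These give a single integrable dominating function for $\|a_{ij}(\btheta)\|^2$ over all of $\Theta$.
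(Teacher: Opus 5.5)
Your proposal is correct and takes essentially the same approach as the paper. You write the quadratic form of $V_j(\btheta)$ as the expectation, over $\{j\in C_i\}$, of the conditional second moment of a form that is affine in a nondegenerate response; this forces $c=0$ and $X_i^\top b=0$ on that event, contradicting Condition~(C)(ii), and you then get the uniform bound from continuity (dominated convergence under (A), (B), (D)) and compactness of $\Theta$. The one implicit step is that deducing $c=0$ needs $P_Q(j\in C_i)>0$, which the active-component convention, or (C)(ii) itself, already supplies.
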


The proof is in \posdefproofapp.

\subsection{Main Asymptotic Result}

Under Conditions~(A)--(E), the conditions of Theorem~\ref{thm:jointdensity} and the active-component convention of Section~\ref{sec:theory}, the five regularity conditions of \citet{borgert_hannig_2026} hold for the hybrid working model. They concern the local asymptotic normality of the model sequence, uniform convergence of the Jacobian to a limiting density, positivity and continuity of that density, a likelihood-splitting construction, and the existence of exponentially consistent tests. Their verification is in \bvmproofapp, and Lemma~\ref{lem:posdef} supplies the positivity of the limiting Jacobian density that the second and third conditions require. Theorem~3.1 of \citet{borgert_hannig_2026} then gives the following.

\begin{theorem}[Bernstein--von Mises for the Hybrid Generalized Fiducial Distribution]
\label{thm:bvm}
Assume Conditions~(A)--(E) of Section~\ref{sec:modelconditions}, the conditions of Theorem~\ref{thm:jointdensity} and the active-component convention of Section~\ref{sec:theory}. Let $\bar{\btheta}$ be a draw from the \gls{gfd} and let $R_{\sqrt n(\bar{\btheta}-\btheta_0)\mid\mathbf y}$ denote the conditional law of $\sqrt n(\bar{\btheta}-\btheta_0)$ given $\mathbf y$. Then
$$\left\|R_{\sqrt n(\bar{\btheta}-\btheta_0)\mid\mathbf y}
-N\!\left(\fisher_{\btheta_0}^{-1}\Delta_{n,\btheta_0},\ \fisher_{\btheta_0}^{-1}\right)\right\|_{TV}
\xrightarrow{P_{\btheta_0}^{(n)}}0,$$
where $\Delta_{n,\btheta_0}=n^{-1/2}\sum_{i=1}^n s_i(\btheta_0)$ is the central sequence of \lanassump{} of \bvmproofapp, $s_i(\btheta_0)$ is the observation-level score derived in \scorehessianapp, and $\fisher_{\btheta_0}$ is the population Fisher information of Condition~(E).
\end{theorem}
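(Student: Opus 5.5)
The plan is to derive the statement as a direct application of Theorem~3.1 of \citet{borgert_hannig_2026}. I would verify its five regularity conditions for the hybrid working model with fiducial density $L(\mathbf y\mid\btheta)\,J(\btheta,\mathbf y_C)$ from Theorem~\ref{thm:jointdensity}. Throughout, I condition on the random design only through the oracle partition, which is held fixed as $\btheta$ varies. Because the $Z_i$ are \gls{iid} and the partition indicator $\mathbf 1\{j\in C_i\}$ is a deterministic function of $Z_i$ and $\btheta_0$, the pairs $(Z_i,\mathbf Y_i)$ are \gls{iid} under the hybrid model. Each observation-level log-likelihood is therefore a sum of \gls{nb} log-\gls{pmf}s over $D_i$ and normal log-densities over $C_i$, and the classical \gls{iid} machinery applies with a mixed dominating measure.

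\textbf{Step 1: local asymptotic normality.} By Conditions~(A), (B) and~(D), $\mu_{ij}$ and $\sigma_{ij}^2$ are bounded above and away from zero on $\Theta$. The \gls{nb} and normal log-likelihoods are then smooth in $\btheta_j$, and their scores have moments of all orders. I would show differentiability in quadratic mean of $\btheta\mapsto p_{\btheta}(\mathbf y\mid z)$, with a uniformly integrable score envelope. This yields LAN with central sequence $\Delta_{n,\btheta_0}=n^{-1/2}\sum_i s_i(\btheta_0)$ and information $\fisher_{\btheta_0}$, which is nonsingular by Condition~(E).

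\textbf{Step 2: Jacobian convergence, positivity and continuity.} By Proposition~\ref{prop:jacobian}, $J=\prod_j\sqrt{\det\widehat V_j}$, where each $\widehat V_j(\btheta)=n^{-1}\sum_i \mathbf 1\{j\in C_i\}\,a_{ij}(\btheta)a_{ij}(\btheta)^\top$ is an average of \gls{iid} terms. Here $a_{ij}=(w^{(\beta)}_{ij}X_i^\top,w^{(\nu)}_{ij})^\top$ is continuous in $\btheta$ and dominated by $C(1+y_{ij})$. A uniform law of large numbers on compact $\Theta$ then gives $\widehat V_j\to V_j$ uniformly in probability. Lemma~\ref{lem:posdef} keeps $\det V_j$ bounded away from zero, so the square-root determinant is uniformly continuous there. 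Hence $J\to\pi$ uniformly, and $\pi$ is continuous and positive at $\btheta_0$.

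\textbf{Step 3: likelihood splitting and tests.} I would construct the likelihood-splitting object of \citet{borgert_hannig_2026} by using the \gls{iid} factorization and allocating a vanishing fraction of observations to control the data-dependent Jacobian. For the exponentially consistent tests of $\btheta_0$ against $\{\|\btheta-\btheta_0\|>\varepsilon\}$, I would use compactness and identifiability: Condition~(C)(ii), together with the fact that the normal mean--variance pair $(\mu_{ij},\sigma^2_{ij})$ determines $(X_i^\top\beta_j,\nu_j)$, gives identifiability. Hoeffding-type bounds on bounded transforms of $\mathbf Y_i$ then give uniformly consistent tests, and the usual LeCam argument upgrades these to exponential ones. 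The conclusion follows from Theorem~3.1 of \citet{borgert_hannig_2026}.

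\textbf{Main obstacle.} I expect the hardest part to be the splitting/tail condition. There the data-dependent Jacobian must be controlled uniformly over all of $\Theta$, not only near $\btheta_0$. The weights grow linearly in $y_{ij}$, so $J$ is not bounded by a constant. I would first try to bound $\sup_\Theta J$ by a polynomial in $n^{-1}\sum_i\|\mathbf y_i\|^2$, which is $O_P(1)$. That bound would show the Jacobian cannot overturn the exponential decay of the likelihood ratio away from $\btheta_0$.
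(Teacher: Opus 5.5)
Your proposal follows the same route as the paper. The theorem is obtained by checking the five regularity conditions of \citet{borgert_hannig_2026} and then invoking their Theorem~3.1: local asymptotic normality from the smooth \gls{iid} hybrid likelihood, uniform convergence of $J$ to $\pi$ through the block structure of Proposition~\ref{prop:jacobian} and a uniform law of large numbers, positivity and continuity of $\pi$ from Lemma~\ref{lem:posdef}, the likelihood-splitting construction, and exponentially consistent tests from compactness and identifiability. Your Step~3 treatment of the splitting condition is only a sketch. Still, the bound $\sup_{\Theta}J=O_P(1)$ from $\det\widehat V_j\le(\operatorname{tr}\widehat V_j)^{K+1}$, together with the lower bound on $J$ that your uniform convergence and Lemma~\ref{lem:posdef} give on all of $\Theta$, looks like the natural control for it; I can't confirm this against the paper's exact statement of that condition, which is in the supplement.
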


\begin{proof}
See \bvmproofapp.
\end{proof}

The \gls{gfd} is therefore asymptotically normal, centered at $\btheta_0+n^{-1/2}\fisher_{\btheta_0}^{-1}\Delta_{n,\btheta_0}$, which is the one-step approximation to the maximum likelihood estimate, with covariance $n^{-1}\fisher_{\btheta_0}^{-1}$. The limiting Jacobian $\pi(\btheta)$ of Section~\ref{sec:jacobian} enters only as a prior would, and like a smooth prior it disappears in the limit, so the \gls{gfd} and the flat-prior posterior on $(\beta,\log\nu)$ agree to first order. Section~\ref{sec:sims} examines how close they are at finite $n$.

\section{Simulation Study}
\label{sec:sims}

We assess the finite-sample behavior of the hybrid generalized fiducial procedure by simulation, with three aims: to examine frequentist coverage as the sample size grows, to compare the fiducial intervals with a flat-prior Bayesian comparator, and to determine how the discrete--continuous partition should be formed in practice. Data are generated from the exact \gls{nb} model throughout, in three regimes defined by the true \gls{nb} shape: two in which some or all components are normal-approximated, so that coverage reflects the effect of the approximation, and one in which every component is \gls{nb} in every observation and the discrete construction of Section~\ref{sec:method:discrete} applies.

\subsection{Design}
\label{sec:sims:design}

The simulation uses a single component ($d=1$), so we write $Y_i$ for the response and suppress the component index throughout this section. For each replication we draw a fresh design of size $n$: an intercept, one continuous covariate distributed as $\operatorname{Uniform}(-\sqrt3,\sqrt3)$, chosen so that it has mean zero and variance one, two indicators for a three-level categorical covariate with category probabilities $(0.40,0.30,0.30)$, and an exposure $t_i\sim\operatorname{Uniform}(60,90)$. A fresh \gls{iid} design in each replication matches the random-design and bounded-support conditions of Section~\ref{sec:modelconditions} and keeps the results from depending on any one realized design. The true regression coefficients are $\beta=(\beta_0,0.10,0.20,-0.15)$ on the log scale, with dispersion $\nu=0.05$, and the response is drawn from $Y_i\sim\operatorname{NB}\big(r_i=\exp(X_i^\top\beta)\,t_i,\ \nu/(1+\nu)\big)$.

To cover the discrete--continuous spectrum we calibrate the intercept $\beta_0$ to three regimes. In the heavy and borderline regimes it fixes the population geometric-mean count, and in the sparse regime the population geometric-mean \gls{nb} shape $r_i$. At $\nu=0.05$ the three regimes have shapes $5$, $25$ and $150$ and counts of about $100$, $500$ and $3000$:
\begin{itemize}
  \item \textbf{Sparse} (shape $5$, $\mu\approx100$): every observation is \gls{nb} and the fit is the discrete construction of Section~\ref{sec:method:discrete};
  \item \textbf{Borderline} (shape $25$, $\mu\approx500$): observations straddle the partition threshold, so a single fit mixes \gls{nb} and normal components;
  \item \textbf{Heavy} (shape $150$, $\mu\approx3000$): every observation is normal-approximated.
\end{itemize}
When $\nu$ varies (Sections~\ref{sec:sims:dispersion} and~\ref{sec:sims:partition}), the heavy and borderline regimes keep their count, so their shape moves with $\nu$, while the sparse regime keeps its shape. The regimes are named for their counts, but the shape governs which representation is appropriate (Section~\ref{sec:method}). The sparse regime's shape of $5$ is typical of the application's entirely discrete bins.

We use sample sizes $n\in\{50,100,200,500\}$ and $R=2000$ replications per cell, giving a Monte Carlo standard error of about $0.5\%$ for an empirical coverage near $95\%$. As a comparator we compute a flat-prior Bayesian posterior under the same working likelihood, by Stan in the heavy and borderline regimes and by an independence sampler on the \gls{nb} likelihood in the sparse regime. In the heavy and borderline regimes this comparison isolates the effect of the continuous Jacobian, and in the sparse regime that of the factor $\mathbb E_{\btheta}[v_{\mathbf U}(\btheta)]$ of Lemma~\ref{lem:discrete}. The $23$ Stan fits at $\nu=0.05$ whose chains had not converged ($\widehat R>1.05$) were refit with longer warmup and a higher target acceptance rate, after which all had converged, so every fit is included. The same refit left six of the $16{,}000$ fits in Section~\ref{sec:sims:dispersion} and $118$ hybrid fits in Section~\ref{sec:sims:partition} unconverged, and these are excluded. Including the six would change no cell by more than $0.001$, and the effect of the $118$ is reported in \simulationapp. Every discrete fit is included. Of the $8{,}000$ in the coverage study, $41$ had a $\log\nu$ chain whose effective sample size fell below $0.8$ times that of the coefficients (\discretecompapp).

Sections~\ref{sec:sims:coverage} and~\ref{sec:sims:dispersion} fix $\nu$ within each design and form the partition by a mean threshold $\tau=500$, which at fixed $\nu$ is the shape rule of Section~\ref{sec:method} with $\omega=\tau\nu=25$. Section~\ref{sec:sims:partition} lets $\nu$ vary across designs, where the two rules differ, and compares partition rules.

\subsection{Coverage and Asymptotic Normality}
\label{sec:sims:coverage}

Across all $60$ combinations of regime, parameter and sample size, the empirical coverage of the $95\%$ fiducial intervals lay between $0.93$ and $0.96$, averaging $0.946$. \simulationapp{} gives the individual cells. Coverage was close to nominal in every regime: $0.947$ in the heavy regime, $0.942$ in the borderline regime and $0.950$ in the sparse regime. The regimes differ by less than one percentage point, and the borderline regime, where a single fit mixes \gls{nb} and normal components, is calibrated as well as the two homogeneous regimes. The flat-prior Bayesian comparator behaved almost identically, averaging $0.946$ overall with the same pattern across regimes. Fit to the same datasets, the two sets of $95\%$ intervals matched in median width to within $1.2\%$ in every cell and within $0.6\%$ at every sample size above $50$ (\simulationapp).

Figure~\ref{fig:bvm_local} displays the fiducial and flat-prior Bayesian distributions on the local scale $h=\sqrt n(\theta-\theta_0)$. The two differ slightly at the smallest sample size and become increasingly similar as $n$ grows, both approaching the normal limit of Theorem~\ref{thm:bvm}. Figure~\ref{fig:calibration} compares empirical and nominal coverage levels, and Figure~\ref{fig:widths} compares interval lengths.

\begin{figure}[tbp]
  \centering
  \includegraphics[width=4.9in]{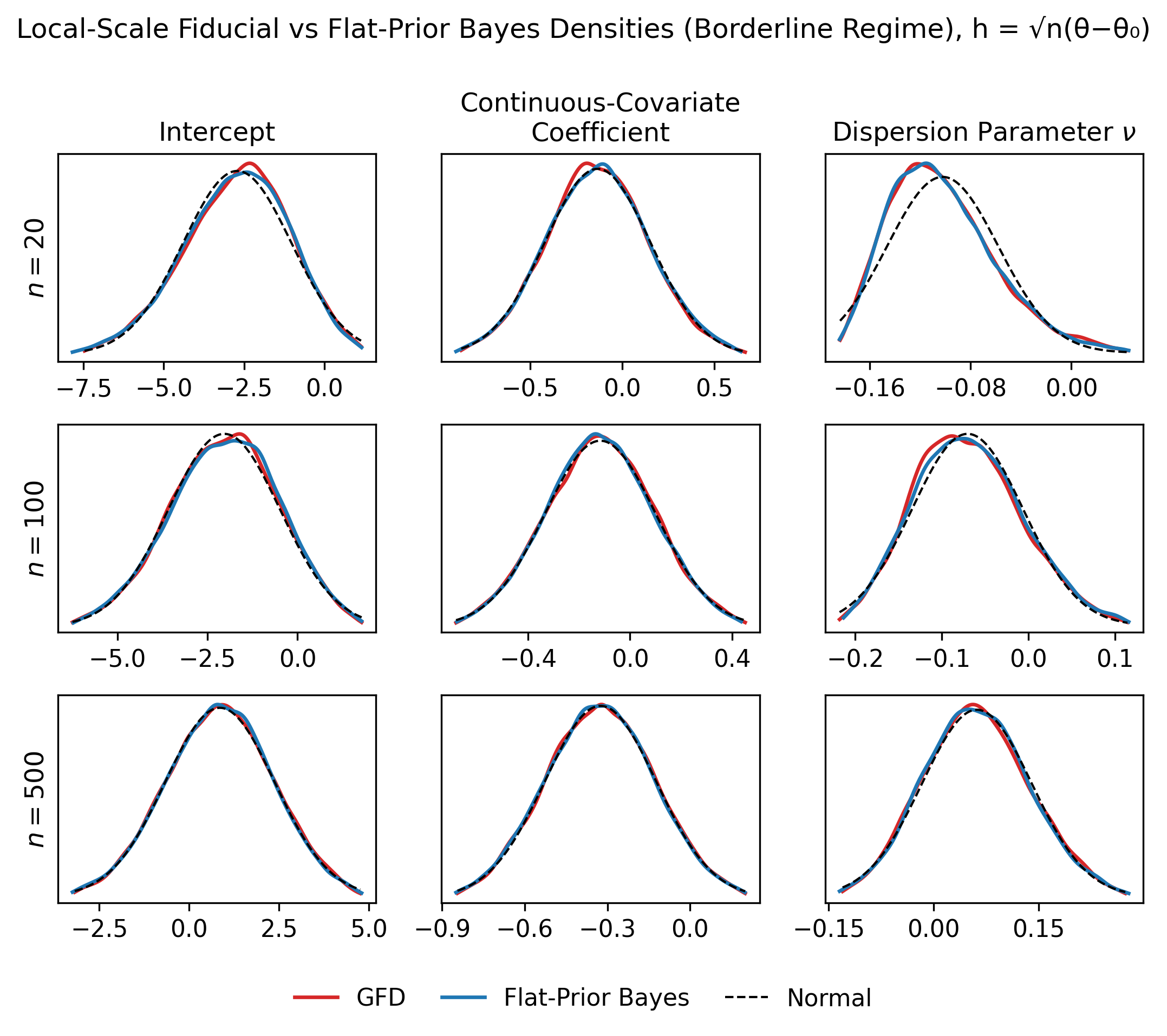}
    \caption{Fiducial (red) and flat-prior Bayesian (blue) distributions on the local scale $h=\sqrt{n}(\theta-\theta_0)$ in the borderline regime, for one simulated dataset at each of $n=20$, $100$ and $500$ (rows). The value $n=20$ lies below the coverage grid and is shown to make the small-sample difference visible. Columns are the intercept, the continuous-covariate coefficient and $\nu$. The dashed curve is a normal density matched to the fiducial draws in mean and standard deviation, so it checks shape rather than location or scale. The two distributions differ slightly at $n=20$, most visibly for $\nu$, and are nearly indistinguishable by $n=500$.}
  \label{fig:bvm_local}
\end{figure}

\begin{figure}[tbp]
  \centering
  \includegraphics[width=4.9in]{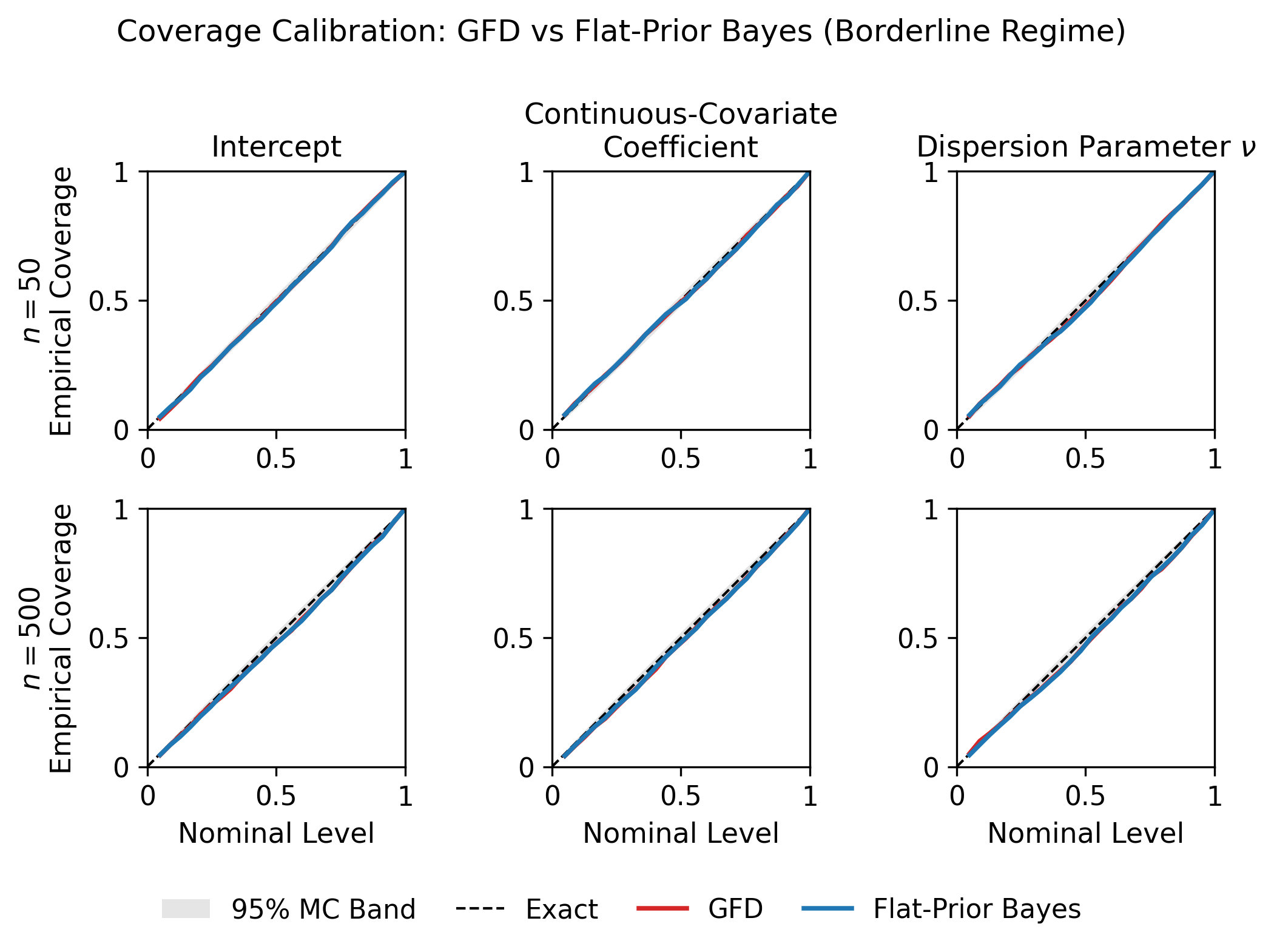}
  \caption{Empirical against nominal coverage for the fiducial (red) and flat-prior Bayesian (blue) intervals in the borderline regime, for the intercept, the continuous-covariate coefficient and dispersion parameter $\nu$ (columns) at $n=50$ and $500$ (rows). The dashed diagonal is exact calibration. The shaded region is a pointwise $95\%$ Monte Carlo band, $p\pm1.96\sqrt{p(1-p)/R}$ with $R=2000$. Both methods track the diagonal, sitting marginally below it at intermediate levels at $n=500$ by about the width of the band.}
  \label{fig:calibration}
\end{figure}

\begin{figure}[tbp]
  \centering
  \includegraphics[width=4.9in]{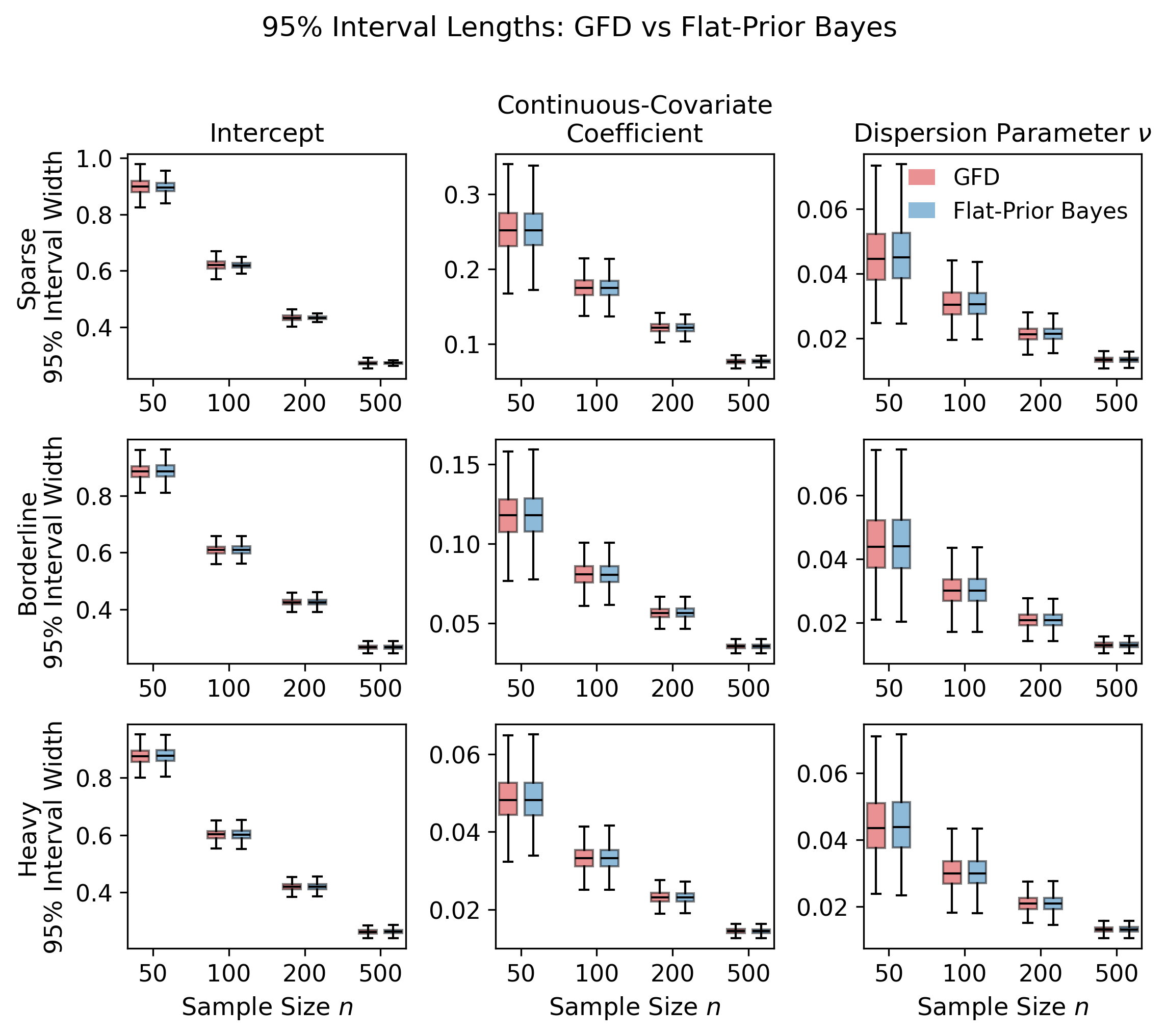}
    \caption{Distribution of $95\%$ interval lengths for the fiducial (red) and flat-prior Bayesian (blue) intervals, by regime (rows: sparse, borderline, heavy) and parameter (columns). In every regime the two methods give intervals of essentially equal length, contracting at the parametric rate as $n$ grows. In the sparse regime the fiducial intervals are from the discrete construction.}
  \label{fig:widths}
\end{figure}

\subsection{Sensitivity to the Dispersion}
\label{sec:sims:dispersion}

Under a partition by the true mean, an observation assigned to the normal component when $\mu_i\geq\tau$ has shape at least $\tau\nu$, a bound that depends on the dispersion (Section~\ref{sec:method}). The simulations above fix $\nu=0.05$, for which the shape at the threshold is $25$. To assess sensitivity to greater overdispersion, we repeat the heavy and borderline regimes at $\nu=0.02$ and $\nu=0.005$, which give shapes of $10$ and $2.5$ at the threshold. The sparse regime is omitted because it is calibrated to a fixed shape of $5$ and is fit entirely by the discrete construction at every $\nu$, so the normal approximation whose sensitivity is being tested never enters. All other settings are unchanged, and the regime intercepts are recalibrated to hold the target mean counts fixed. In $39$ borderline datasets with $n\le100$, the mean rule assigned fewer than $K+1$ observations to the normal component (too few for the rank condition of Theorem~\ref{thm:jointdensity}). As in Section~\ref{sec:sims:partition}, these are fit entirely by the discrete construction.

Table~\ref{tab:dispersion} reports average empirical coverage of the $95\%$ fiducial intervals across the five parameters, based on $R=1000$ replications per cell, except the $\nu=0.05$ rows, which reproduce the $R=2000$ main experiment. The Monte Carlo standard error for a coverage probability near $0.95$ is about $0.007$ ($0.005$ for the $\nu=0.05$ rows). At a fixed $\nu$, the shapes of the normal-assigned observations differ by regime: in the borderline regime they lie near the threshold, whereas in the heavy regime their means, and therefore their shapes, are substantially larger.

Coverage is close to nominal at $\nu=0.02$ and $\nu=0.05$ in both regimes. The lowest coverage for any single parameter in these settings is $0.925$. At $\nu=0.005$, coverage in the borderline regime falls by three to five percentage points, with averages from $0.900$ to $0.919$ and a lowest single-parameter coverage of $0.872$. The heavy regime is less affected, with averages from $0.929$ to $0.946$. In these simulations, coverage is close to nominal when the median \gls{nb} shape among normal-assigned observations is about $10$ or larger. Across those cells, fewer than $0.1\%$ of normal-assigned observations have shape below $7$, and none below $6$. Coverage deteriorates when the median shape falls to $2.7$, where shapes as small as $1.5$ are assigned to the normal component.

Coverage in the borderline $\nu=0.005$ cell does not improve with $n$: the distribution of assigned shapes does not change as $n$ grows, so the normal branch stays misspecified while the intervals contract. For the most affected parameter, the fiducial intervals are about $16\%$ too short at $n=50$ and $22\%$ too short at $n=500$ relative to the repeated-sampling variability under the \gls{nb} model. The coverage failures are split about evenly between the two tails, indicating underestimated uncertainty rather than directional bias. Section~\ref{sec:app:fit} relates the shapes assigned to the normal component in the application to these results.

\begin{table}[tbp]
\centering
\begin{tabular}{llccccc}
\toprule
& & & \multicolumn{4}{c}{Sample size $n$} \\
\cmidrule(lr){4-7}
Regime & $\nu$ & Normal-assigned shape: median (5th pct) & $50$ & $100$ & $200$ & $500$ \\
\midrule
\multirow{3}{*}{Borderline}
 & $0.005$ & $2.7$ \ \ ($2.0$) & 0.919 & 0.917 & 0.902 & 0.900 \\
 & $0.02$  & $10.9$ \ ($8.2$) & 0.945 & 0.940 & 0.938 & 0.940 \\
 & $0.05$  & $27.3$ ($20.5$) & 0.948 & 0.942 & 0.941 & 0.937 \\
\midrule
\multirow{3}{*}{Heavy}
 & $0.005$ & $15.0$ \ ($10.7$) & 0.946 & 0.943 & 0.929 & 0.943 \\
 & $0.02$  & $59.8$ ($42.9$) & 0.953 & 0.950 & 0.943 & 0.950 \\
 & $0.05$  & $149.7$ ($107.2$) & 0.950 & 0.944 & 0.948 & 0.947 \\
\bottomrule
\end{tabular}
\caption{Average empirical coverage of $95\%$ generalized fiducial intervals across the five parameters, by regime, dispersion and sample size. The shape column reports the median and fifth percentile of the \gls{nb} shape among observations assigned to the normal component. Coverage is close to nominal except in the borderline regime at $\nu=0.005$, where the assigned shapes are smallest.}
\label{tab:dispersion}
\end{table}

\subsection{Choice of Partition Rule}
\label{sec:sims:partition}

Assigning each observation by its own realized count, the most obvious rule, fails badly: intercept coverage falls from about $0.91$ at $n=50$ to $0.71$ at $n=500$ in the borderline regime, with a similar pattern for the dispersion and comparatively stable slopes (\simulationapp). The partition must not be formed from the realized response.

Assigning by an aggregate estimate instead avoids response-dependent assignment. With the pooled-rate estimate of the expected count,
$$\widehat\mu_i=\widehat r\,t_i,
\qquad
\widehat r=\frac1n\sum_{i=1}^n\frac{Y_i}{t_i},$$
the \emph{mean rule} assigns observation $i$ to the normal component when $\widehat\mu_i\geq\tau$. The \emph{shape rule} assigns it when $\widehat\mu_i\widehat\nu\geq\omega$, where $\widehat\nu$ is the moment estimator implied by $\operatorname{Var}(Y_i)=\mu_i(1+\nu^{-1})$: with $\widehat m_i$ the fitted means from a Poisson regression of the counts on the covariates with a $\log t_i$ offset, $\widehat D=(n-K)^{-1}\sum_i(Y_i-\widehat m_i)^2/\widehat m_i$ and $\widehat\nu=1/(\widehat D-1)$. The \emph{lower-bound shape rule} assigns it when $\widehat\mu_i\widehat\nu\exp(-1.645\,\widehat s)\geq\omega$, where $\widehat s$ is a bootstrap standard error of the log estimated shape, so that sampling error in $\widehat\nu$ can move an observation only toward the exact \gls{nb} component. Under every rule, a component that would retain fewer than $K+1$ normal-assigned observations is modeled entirely by the \gls{nb} distribution, as the rank condition of Theorem~\ref{thm:jointdensity} requires.

When the dispersion is common to all components, the mean and shape rules coincide, since $\widehat\mu_i\geq\tau$ is equivalent to $\widehat\mu_i\nu\geq\tau\nu$. The experiments of Section~\ref{sec:sims:coverage}, which fix $\nu=0.05$, therefore cannot distinguish them. Table~\ref{tab:partition} compares them when the dispersion varies.

To compare the rules when the dispersion varies, as it does across the application's bins, we draw $\nu$ in each replication from a log-uniform distribution on $[0.002,0.07]$, a range covering the fitted dispersions in Section~\ref{sec:app:fit}, and recalibrate the intercept to hold the regime's target fixed: the mean count in the heavy and borderline regimes, and the \gls{nb} shape in the sparse regime, whose mean therefore varies from about $70$ to $2{,}500$ while its shape stays at $5$. Each dataset is analyzed under the oracle shape rule, the mean rule with $\tau=500$, the shape rule with $\omega=10$, and the lower-bound shape rule with $\omega=10$ and $200$ bootstrap resamples, with $R=1000$ paired replications per cell. Entirely \gls{nb} fits use the discrete construction, fit once to each dataset and recorded against every rule that assigned nothing to the normal component. Hybrid fits that had not converged after a refit, $118$ of the $25{,}999$ ($0.45\%$, mostly heavy-regime oracle fits), are excluded (\simulationapp). Every discrete fit is included. In the sparse regime every non-converged hybrid fit is a mean-rule fit, so the mean rule's shortfall there is present among its converged fits.

Table~\ref{tab:partition} reports average coverage across the five parameters. In the sparse regime every observation has shape $5$, below $\omega=10$, so the three shape-based rules assign almost nothing to the normal component and coincide, and these fits are entirely \gls{nb}. The mean rule instead assigns nearly half, because a shape of $5$ gives a mean above $\tau=500$ whenever $\nu\lesssim0.01$, and it under-covers at every sample size. The shortfall falls on the slope coefficients, and only in the replications where an assignment occurs: their coverage there is about $0.91$ at all four sample sizes, against about $0.95$ where nothing is assigned. A large mean does not license the normal approximation when the shape is small.

In the heavy regime the mean rule assigns every observation to the normal component, well above the oracle, but coverage is unaffected, because large means keep the shape far above the range where the normal approximation fails. In the borderline regime it assigns half again as many observations as the oracle and under-covers at every sample size, with its lowest single-parameter coverage falling to $0.897$ at $n=500$. The shortfall does not diminish with $n$, consistent with Section~\ref{sec:sims:dispersion}. The shape rule matches the oracle to within $0.002$ in every cell, and the lower-bound rule to within $0.004$ while assigning fewer observations to the normal component. A partition by the mean is therefore harmless or harmful depending on the shape it happens to select, whereas a partition by the shape tracks the oracle everywhere. We use the lower-bound shape rule in the application.

\begin{table}[tbp]
\centering
\begin{tabular}{llccccc}
\toprule
& & & \multicolumn{4}{c}{Sample size $n$} \\
\cmidrule(lr){4-7}
Regime & Partition rule & Normal-assigned & $50$ & $100$ & $200$ & $500$ \\
\midrule
\multirow{4}{*}{Sparse}
 & Oracle shape             & $0\%$         & 0.954 & 0.948 & 0.950 & 0.955 \\
 & Mean, $\tau=500$         & $46$--$50\%$  & 0.940 & 0.934 & 0.937 & 0.939 \\
 & Shape, $\omega=10$       & $0$--$0.5\%$  & 0.954 & 0.948 & 0.950 & 0.955 \\
 & Lower-bound shape        & $0\%$         & 0.954 & 0.948 & 0.950 & 0.955 \\
\midrule
\multirow{4}{*}{Borderline}
 & Oracle shape             & $36$--$37\%$  & 0.944 & 0.952 & 0.945 & 0.948 \\
 & Mean, $\tau=500$         & $54$--$55\%$  & 0.927 & 0.934 & 0.919 & 0.917 \\
 & Shape, $\omega=10$       & $37\%$        & 0.945 & 0.951 & 0.945 & 0.948 \\
 & Lower-bound shape        & $26$--$34\%$  & 0.947 & 0.953 & 0.946 & 0.950 \\
\midrule
\multirow{4}{*}{Heavy}
 & Oracle shape             & $84$--$87\%$  & 0.949 & 0.950 & 0.943 & 0.949 \\
 & Mean, $\tau=500$         & $100\%$       & 0.947 & 0.949 & 0.940 & 0.946 \\
 & Shape, $\omega=10$       & $85$--$89\%$  & 0.949 & 0.950 & 0.943 & 0.951 \\
 & Lower-bound shape        & $75$--$83\%$  & 0.953 & 0.950 & 0.944 & 0.949 \\
\bottomrule
\end{tabular}
\caption{Average empirical coverage of $95\%$ generalized fiducial intervals across the five parameters when the dispersion varies across replications ($\nu$ log-uniform on $[0.002,0.07]$), by regime, partition rule and sample size, from $R=1000$ paired replications per cell with Monte Carlo standard error about $0.007$. The normal-assigned column gives the range, over sample sizes, of the average fraction of observations assigned to the normal component.}
\label{tab:partition}
\end{table}

\section{Application: Professional Women's Soccer}
\label{sec:app}

This section describes the data, the fitted model and its partition, the multiplicity adjustment, and the estimated effects.

\subsection{Data and Quantile Cube Construction}
\label{sec:app:data}

We apply the \gls{hgfi} framework to GPS tracking data, obtained under \gls{irb} 25-1234, from every match of one season for a professional women's soccer team. We kept the sessions in which an athlete played at least $60$ minutes of the match, and the athletes with at least three such sessions. This gave $17$ athletes, $26$ matches and $216$ athlete--match sessions. Not every athlete played in every included match.

Each raw GPS record corresponds to one athlete in one match and contains ten observations per second: a timestamp with the athlete's longitude and latitude. Following \citet{thomas_hannig_2026}, each session was converted to a quantile cube with cutoffs computed from the pooled match-play data of all sessions. Velocity and acceleration were each cut at their empirical quintiles, and movement angle, the angle between the velocity and acceleration vectors, at its empirical quartiles after a rotation of $-41^\circ$ so that the four sectors correspond to forward, right, backward and left, giving $d=5\times5\times4=100$ movement bins. The response for session $i$ is the vector $\mathbf y_i\in\mathbb N_0^{100}$ of decisecond counts in each bin, accumulated over the whole session: warm-up, both halves and cool-down. Table~\ref{tab:vel_acc_quantiles} and Figure~\ref{fig:angle_quantiles} show the cutoffs.

\begin{table}[tbp]
\centering
\begin{tabular}{lccccc}
\toprule
Quantile (\%) & 0 & 20 & 40 & 60 & 80 \\
\midrule
Velocity (m/s)                          & 0.0100 & 0.3475 & 0.9721 & 1.4704 & 2.5606 \\
Acceleration (m/s\textsuperscript{2})   & 0.0010 & 3.8354 & 6.0756 & 8.8336 & 13.1388 \\
\bottomrule
\end{tabular}
\caption{Lower edges of the five velocity and acceleration bins, the empirical quantiles of the pooled first- and second-half data across all sessions. The cutoffs are computed from match play alone but applied to the whole session, so the bins hold equal time on match play (to within $0.1$ percentage points) but not on the analysis set.}
\label{tab:vel_acc_quantiles}
\end{table}

\begin{figure}[tbp]
  \centering
  \begin{minipage}[t]{0.40\textwidth}
    \centering
    \includegraphics[height=2.3in]{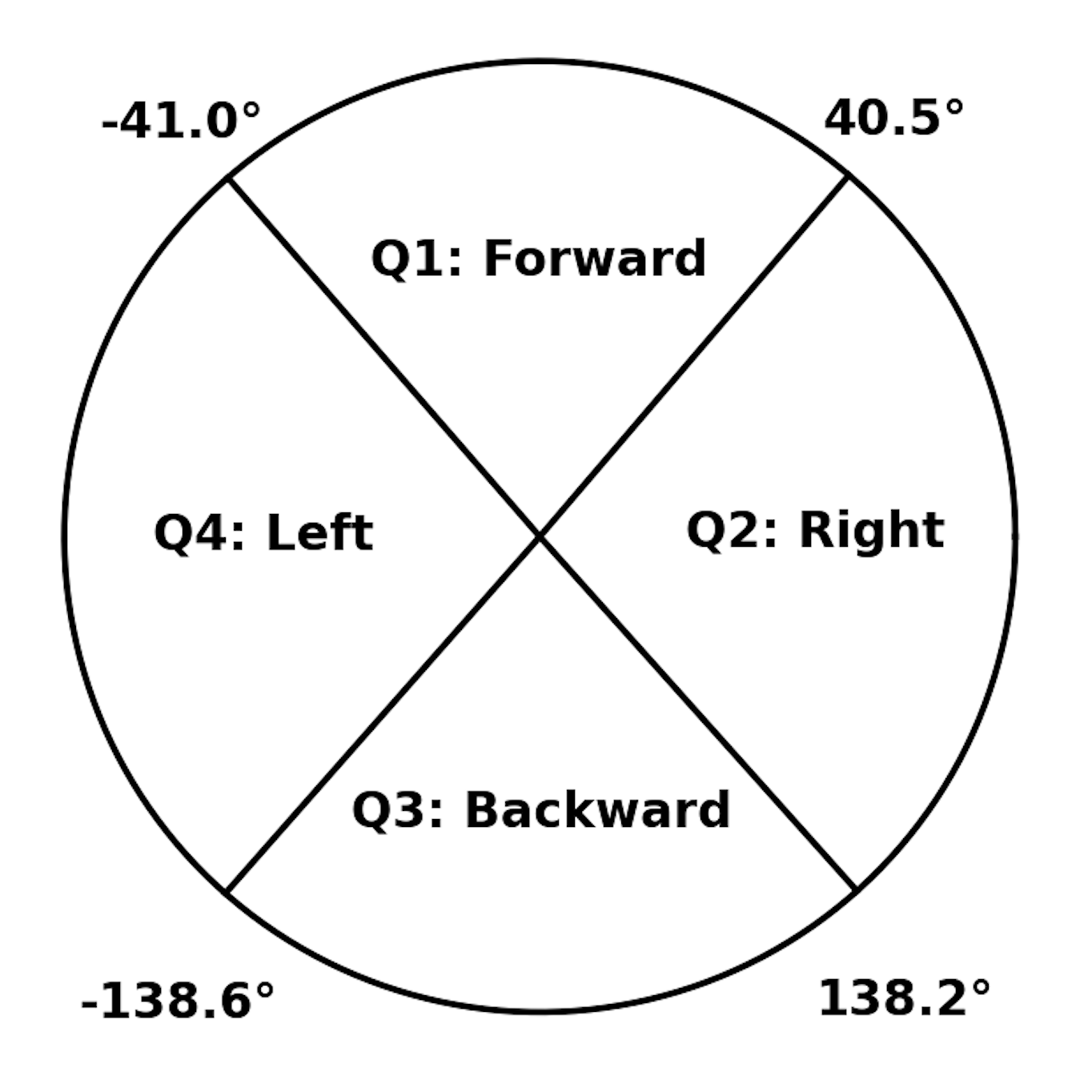}
    \caption{Movement-angle sectors of the quantile cube. The four boundaries are the empirical angle quartiles ($-41.0^\circ$, $40.5^\circ$, $138.2^\circ$, $-138.6^\circ$), measured clockwise from a forward baseline, and the sectors they define are the four directional bins (forward, right, backward, left).}
    \label{fig:angle_quantiles}
  \end{minipage}\hfill
  \begin{minipage}[t]{0.56\textwidth}
    \centering
    \includegraphics[height=2.3in]{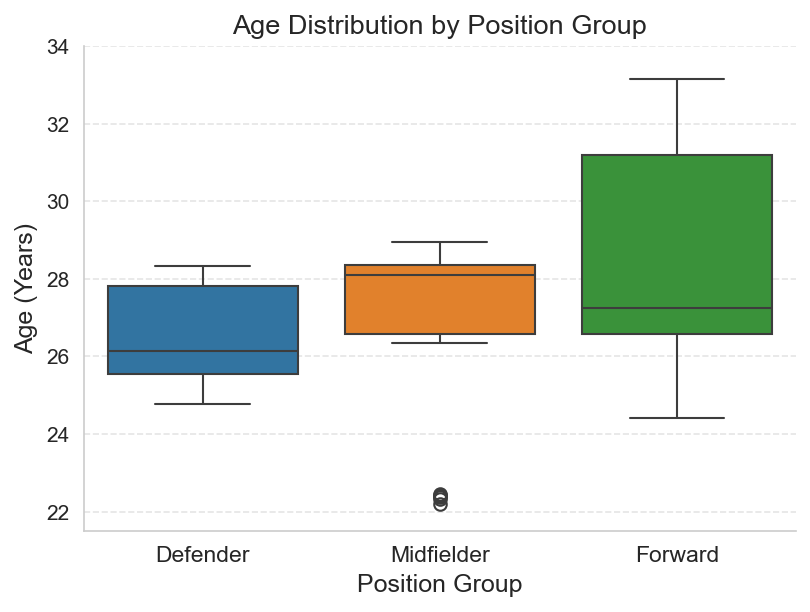}
    \caption{Age distribution by position group across the $216$ athlete--match sessions. Boxes show the interquartile range and median, whiskers extend to $1.5\times$ the interquartile range, and outliers are individual points. Forwards span a wider age range than defenders and midfielders, motivating the inclusion of both age and position in the design.}
    \label{fig:age_by_position}
  \end{minipage}
\end{figure}

The exposure $t_i$ is the time spent in prescribed activity in session $i$: playing time in the match plus the durations of that session's warm-up and cool-down. Warm-up and cool-down together contribute a median of $47.4$ minutes (interquartile range $38.9$ to $57.1$), giving a median exposure of $136$ minutes (range $94$ to $195$). Exposure varies across athletes and sessions, since some athletes begin their warm-up earlier or spend longer cooling down.

The design matrix $\mathbf X\in\mathbb R^{216\times4}$ has an intercept, standardized age (\texttt{Age\_z}, z-scored within the sample, mean $27.3$ years and standard deviation $2.1$ years), and indicators for forward and midfielder, with defender as the reference category. Ages ranged from $22.2$ to $33.1$ years, recomputed at each match date, and the position groups were six defenders, four midfielders and seven forwards. Goalkeepers are not in the dataset because their tracking data were not available. Figure~\ref{fig:age_by_position} shows the age distribution by position.

\subsection{Model Specification and Partition}
\label{sec:app:fit}

We fit the component-specific hybrid model of Section~\ref{sec:method} to the $216$ sessions: each of the $d=100$ bins has its own $\beta_j\in\mathbb R^4$ and dispersion $\nu_j$, for $500$ parameters in all. Observations are assigned by the lower-bound shape rule of Section~\ref{sec:sims:partition}. For bin $j$, the expected count is estimated by a pooled rate scaled by exposure, $\widehat\mu_{ij}=\widehat r_jt_i$ with $\widehat r_j=n^{-1}\sum_iY_{ij}/t_i$, and the dispersion $\widehat\nu_j$ by the covariate-adjusted moment estimator. Because sessions are clustered within athletes, the uncertainty in the estimated shape is assessed by resampling athletes rather than sessions: with $\widehat s_j$ the standard deviation of the log estimated shape over $400$ athlete-level bootstrap resamples, observation $i$ is assigned to the normal component in bin $j$ when
$$\widehat\mu_{ij}\,\widehat\nu_j\exp(-1.645\,\widehat s_j)\geq\omega,
\qquad \omega=10.$$
The bootstrap standard errors have a median of $0.16$ (range $0.07$ to $0.24$), so the lower bound is typically about $0.77$ times the estimated shape. Section~\ref{sec:sims:dispersion} found coverage near nominal once the median assigned shape reached about $10$, so requiring it of every assigned observation is conservative.

Under this rule, $13{,}301$ of the $21{,}600$ observation--bin pairs ($61.6\%$) are assigned to the normal component (Figure~\ref{fig:shape_lcb}). Twenty-six bins are modeled by the \gls{nb} distribution for every session and are fit by the discrete construction of Section~\ref{sec:method:discrete}. Thirty-eight are assigned to the normal approximation for every session, and $36$ are mixed, so the $74$ active bins contribute $370$ parameters to the results of Sections~\ref{sec:theory} and~\ref{sec:asymptotics}. Figure~\ref{fig:partition_map} displays the partition across the movement grid. Five of the entirely discrete bins had a single observation whose lower bound exceeded the threshold. Since the continuous block needs at least $K+1=5$ rows for full column rank (Section~\ref{sec:sims:partition}), these bins were assigned entirely to the \gls{nb} component. Because Figure~\ref{fig:shape_lcb} places substantial mass near the threshold, we refit the model at $\omega=7$ and $\omega=15$. The numbers of significant age and midfielder bins are stable, and only the forward count falls, at $\omega=15$ (\thresholdsensapp).

Every active bin has at least $11$ normal-assigned rows (the median mixed bin has $162$), so the design matrix restricted to those rows has full column rank, as Theorem~\ref{thm:jointdensity} requires, and every active block exceeds the $2(K+1)=10$ rows above which the injectivity requirement holds almost surely (see the remark following Theorem~\ref{thm:jointdensity}).

The fitted dispersions vary considerably across the grid, with fiducial medians from $0.0025$ to $0.070$ (overall median $0.028$, interquartile range $0.011$ to $0.041$), so a common mean threshold of $\tau=500$ would have imposed shape thresholds from $1.3$ to $35$. Under the shape rule, the fitted \gls{nb} shape among the $13{,}301$ normal-assigned pairs has a median of $28.7$, a fifth percentile of $14.2$ and a minimum of $11.2$, so no normal-assigned pair has skewness above about $0.60$.

Because the \gls{gfd} factorizes across bins (Proposition~\ref{prop:jacobian}), each bin is fit independently: the $74$ active bins by Hamiltonian Monte Carlo on the density of Theorem~\ref{thm:jointdensity} (four chains of $4{,}000$ draws after $1{,}000$ warmup, all $\widehat R<1.01$), and the $26$ entirely discrete bins by the sampler of Section~\ref{sec:method:discrete}, with diagnostics in \discretecompapp.

\begin{figure}[tbp]
  \centering
  \includegraphics[width=4.5in]{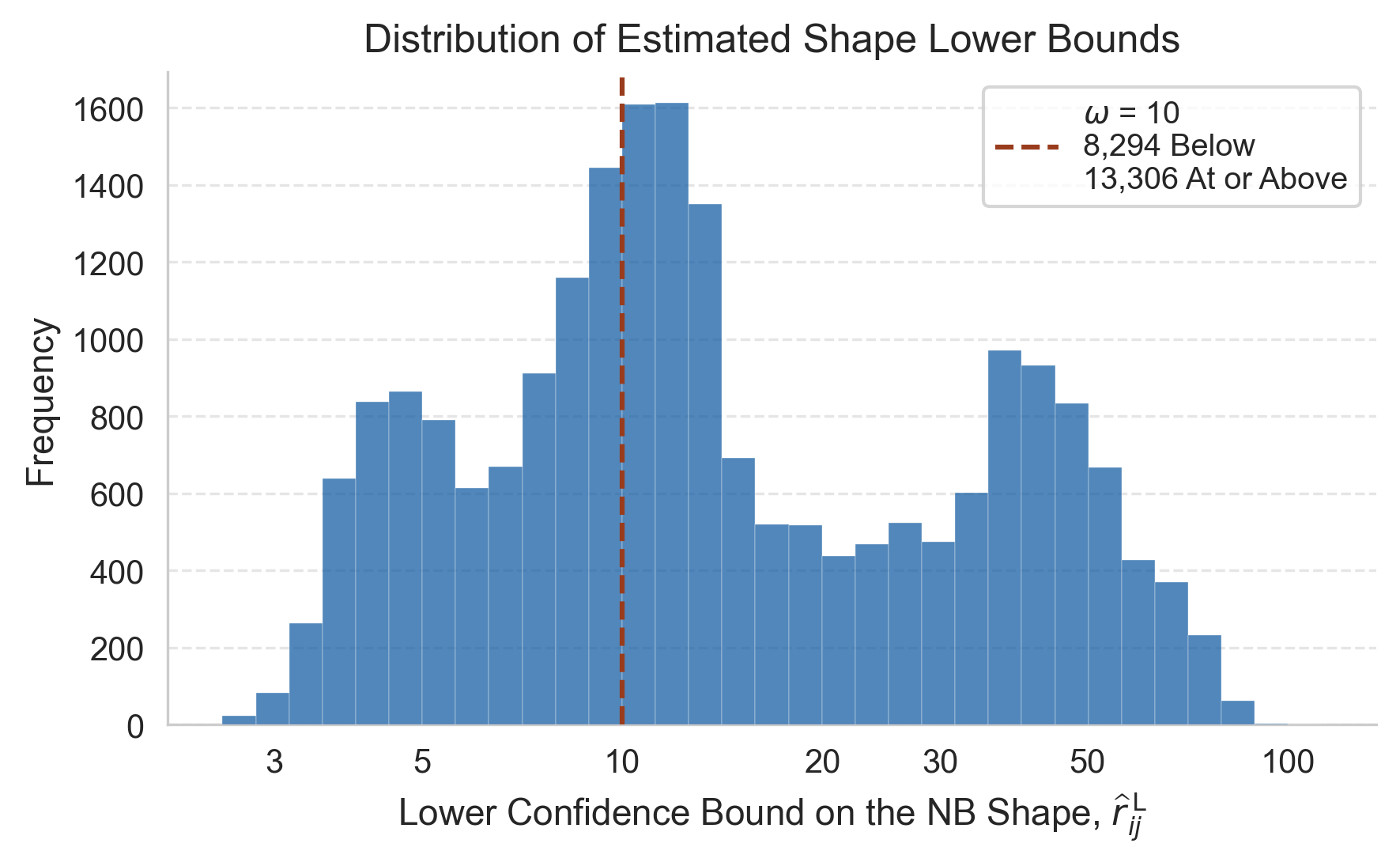}
    \caption{Distribution of the lower confidence bounds on the estimated \gls{nb} shape, $\widehat\mu_{ij}\widehat\nu_j\exp(-1.645\,\widehat s_j)$, across all $216\times100=21{,}600$ observation--bin pairs (log scale). The dashed line marks the partition threshold $\omega=10$: $8{,}294$ pairs lie below it and $13{,}306$ at or above it. After the rank rule of Section~\ref{sec:app:fit} moves five of the latter to the \gls{nb} component, $8{,}299$ pairs are assigned to the exact \gls{nb} component and $13{,}301$ to the normal approximation.}
  \label{fig:shape_lcb}
\end{figure}

\begin{figure}[tbp]
  \centering
  \includegraphics[width=3.6in]{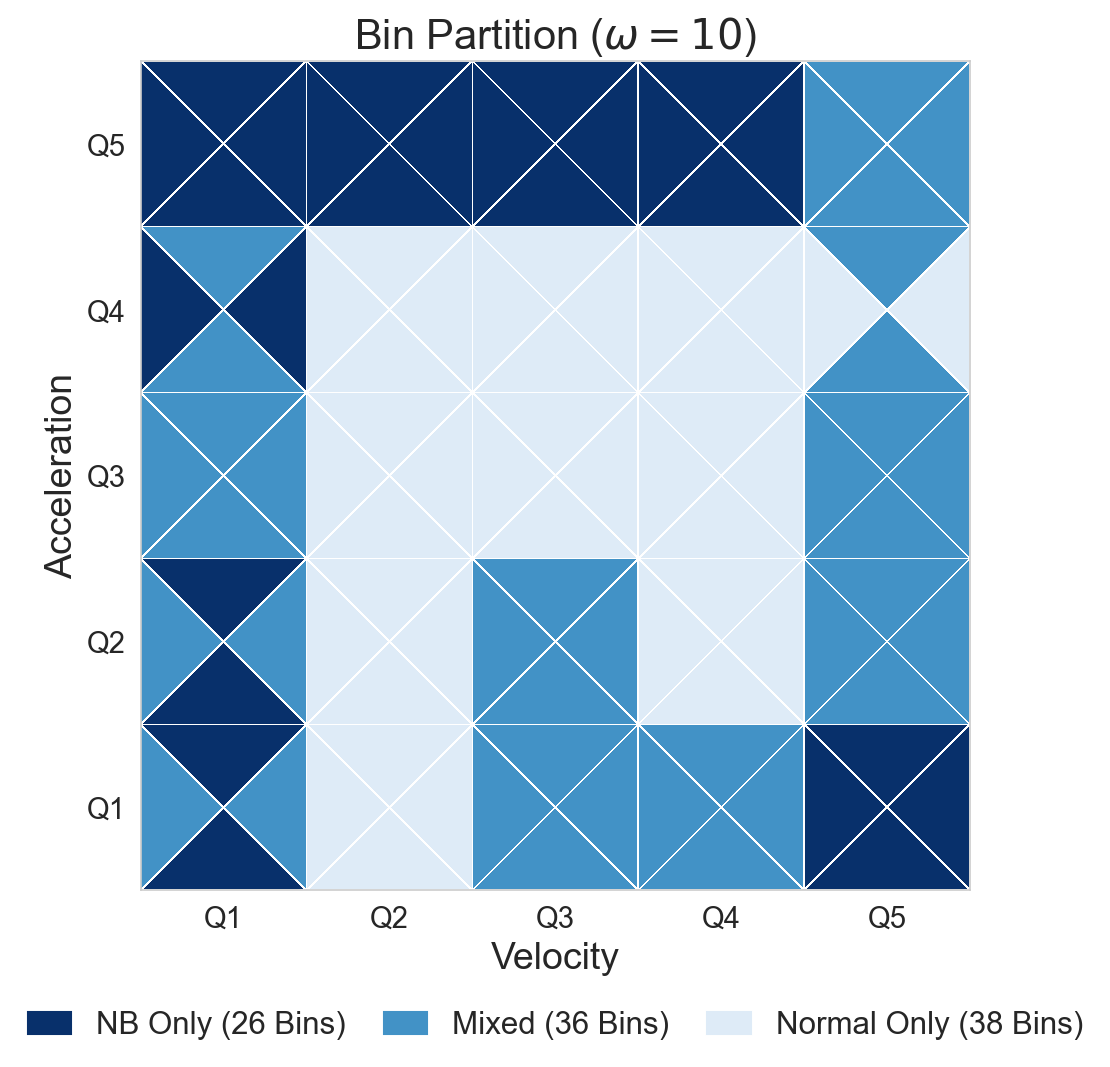}
  \caption{Partition of the $100$ movement bins under the lower-bound shape rule ($\omega=10$), on the velocity $\times$ acceleration grid with the four movement angles as triangles within each cell. Bins are entirely \gls{nb} ($26$), entirely normal ($38$) or mixed across observations ($36$). The entirely discrete bins are the highest-acceleration quintile at every velocity below the highest, the highest-velocity lowest-acceleration cell, and six angle sectors in the lowest-velocity column. The mixed bins lie mainly along the highest- and lowest-velocity columns and in the low-acceleration cells at intermediate velocity, surrounding a dense interior assigned entirely to the normal approximation.}
  \label{fig:partition_map}
\end{figure}

\subsection{Multiple Testing and a Calibration Check}
\label{sec:app:multiplicity}

Under the working model each bin is fit separately, so for a given covariate the $100$ per-bin effects are $100$ simultaneous inferences and call for a multiplicity adjustment. The bins are strongly dependent: $64\%$ of the residual correlations between bin pairs, after regressing the log rates on the covariates, are positive, and the effective number of independent bins by the method of \citet{li_ji_2005} is $16$ of $100$. The working model does not represent this structure (Section~\ref{sec:discuss:limitations}).

Because the \gls{gfd} is continuous, it assigns zero fiducial probability to the event $\beta_{jk}=0$, so a point null hypothesis is not a meaningful target. We therefore pose the problem as one of sign determination. For each bin--covariate pair we compute the fiducial sign-error probability
$$\mathrm{FEP}_{jk}=\min\{P_r(\beta_{jk}>0\mid\mathbf y),\,P_r(\beta_{jk}<0\mid\mathbf y)\},$$
where $P_r(\cdot\mid\mathbf y)$ is probability under the \gls{gfd}: the fiducial probability of the sign opposite to the estimated effect \citep{hannig_et_al_2016}. Under Theorem~\ref{thm:bvm} the $\mathrm{FEP}$ is asymptotically calibrated as a one-sided $p$-value, and $p_{jk}=2\,\mathrm{FEP}_{jk}$ is the corresponding two-sided value. We apply the \gls{bh} procedure \citep{benjamini_hochberg_1995} to the $100$ two-sided $p$-values at level $0.05$ and attach the sign of the fiducial mean to each rejection, which controls the directional \gls{fdr} at that level under the working model \citep{benjamini_yekutieli_2005}. Throughout, ``significant'' means significant under this model. The fiducial probabilities are computed directly from the fiducial draws, with no normal approximation, and Monte Carlo error in the reported counts is small: a bootstrap over the draws gives a standard deviation of at most $2.1$ bins, and each chain analyzed separately reproduces the counts to within four bins.

\paragraph{Calibration check} The nominal level presumes $216$ independent sessions, but age and position vary almost entirely between the $17$ athletes. We therefore reassigned the athletes' position labels at random, every session of an athlete carrying its permuted label, refit all $100$ bins, and repeated this $100$ times. The same was done for age, reassigning athletes' mean ages while keeping each session's within-athlete deviation (\permutationapp). With no position effect present, the procedure declares a median of $31$ forward and $27$ midfielder bins significant, and with no age effect a median of $28$ age bins, against $23$, $48$ and $17$ observed. Significance under the working model is therefore not calibrated at the athlete level for these covariates: the relevant number of independent units is $17$, not $216$. The declared bins of Figure~\ref{fig:app_cubes} are where the fitted model places its largest and best-resolved contrasts among these $17$ athletes, and Section~\ref{sec:app:effects} describes the estimated surfaces in those terms.

\subsection{Effects of Age and Position}
\label{sec:app:effects}

For each covariate the fitted model yields a bin-specific effect $\beta_{jk}$ with a full \gls{gfd}, summarized by its fiducial mean and a $95\%$ equal-tailed fiducial interval. Because $\beta_{jk}$ enters through the log-linear link $\alpha_{ij}=\exp(X_i^\top\beta_j)$, a coefficient $\beta_{jk}$ corresponds to a multiplicative change of $e^{\beta_{jk}}$ in the expected count for bin $j$ per unit change in covariate $k$. For example, $\beta_{jk}=0.10$ corresponds to an $e^{0.10}\approx1.105$, or $10.5\%$, increase in expected count. Because the $100$ bins form a $5\times5\times4$ grid over velocity, acceleration and direction, we display each covariate's estimated effects as a quantile cube (Figure~\ref{fig:app_cubes}), with velocity and acceleration quintiles on the axes, the four movement angles as triangles within each cell, and the bins significant under the working model after \gls{bh} correction shown in color.

Under the working model, at the \gls{bh}-corrected level, the standardized age effect is significant in $17$ of the $100$ bins, the forward contrast (relative to defenders) in $23$, and the midfielder contrast in $48$. The clearest feature of the estimated surfaces is the difference in magnitude between age and position. The estimated age effects are modest, with $|\beta_{jk}|<0.09$ throughout the grid, the largest corresponding to an $8.5\%$ change in expected count, whereas the estimated position contrasts reach $|\beta_{jk}|\approx0.29$ (midfielder, highest-velocity bins), more than three times as large. In the fitted model, then, age is associated with a comparatively fine-grained redistribution of movement intensity, and position with substantially larger differences in the movement distribution.

The estimated age surface follows an acceleration gradient. All six declared bins with positive estimated age effects lie in the highest acceleration quintile, across the upper velocity range ($\beta_{jk}$ up to $0.08$), while all $11$ with negative estimates lie at lower acceleration, most at intermediate velocity ($\beta_{jk}$ as low as $-0.07$). Taken at face value, the fitted model places more of an older athlete's session time in the most demanding acceleration bins and less at lower acceleration, at comparable velocities. The estimated position contrasts are concentrated at the velocity extremes and show a more complex acceleration-dependent structure. Relative to defenders, the estimated midfielder contrast is positive in both the lowest- and the highest-velocity bins, largest in the highest-velocity bins ($\beta_{jk}\approx0.14$ to $0.29$) and smaller in the lowest-velocity bins ($\beta_{jk}$ up to $0.21$), with a band of negative estimates confined to the two highest acceleration quintiles (Q4 and Q5) at low-to-mid velocity ($\beta_{jk}$ as low as $-0.24$). The estimated forward contrast shows the same positive low- and high-velocity pattern ($\beta_{jk}$ up to $0.14$), together with a weaker band of negative estimates across the mid-velocity range at intermediate acceleration ($\beta_{jk}$ as low as $-0.09$). Section~\ref{sec:discuss:soccer} discusses the soccer-tactical interpretation of these patterns and the comparison with existing studies.

\begin{figure}[tbp]
  \centering
  \begin{subfigure}[b]{5in}
    \centering
    \includegraphics[width=2.5in]{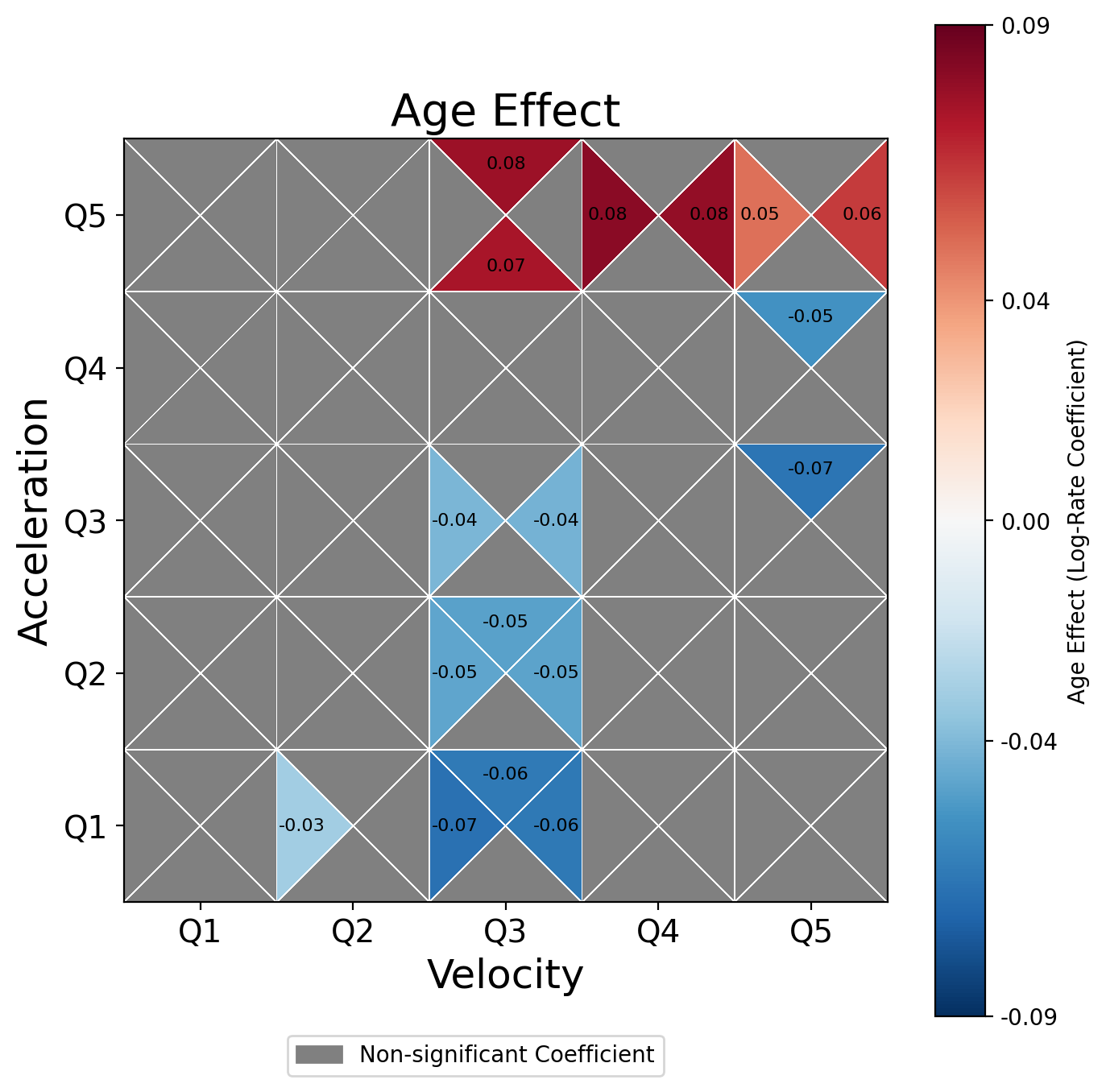}
    \caption{Age}
    \label{fig:cube_age}
  \end{subfigure}
  \\[1ex]
  \begin{subfigure}[b]{2.5in}
    \centering
    \includegraphics[width=2.5in]{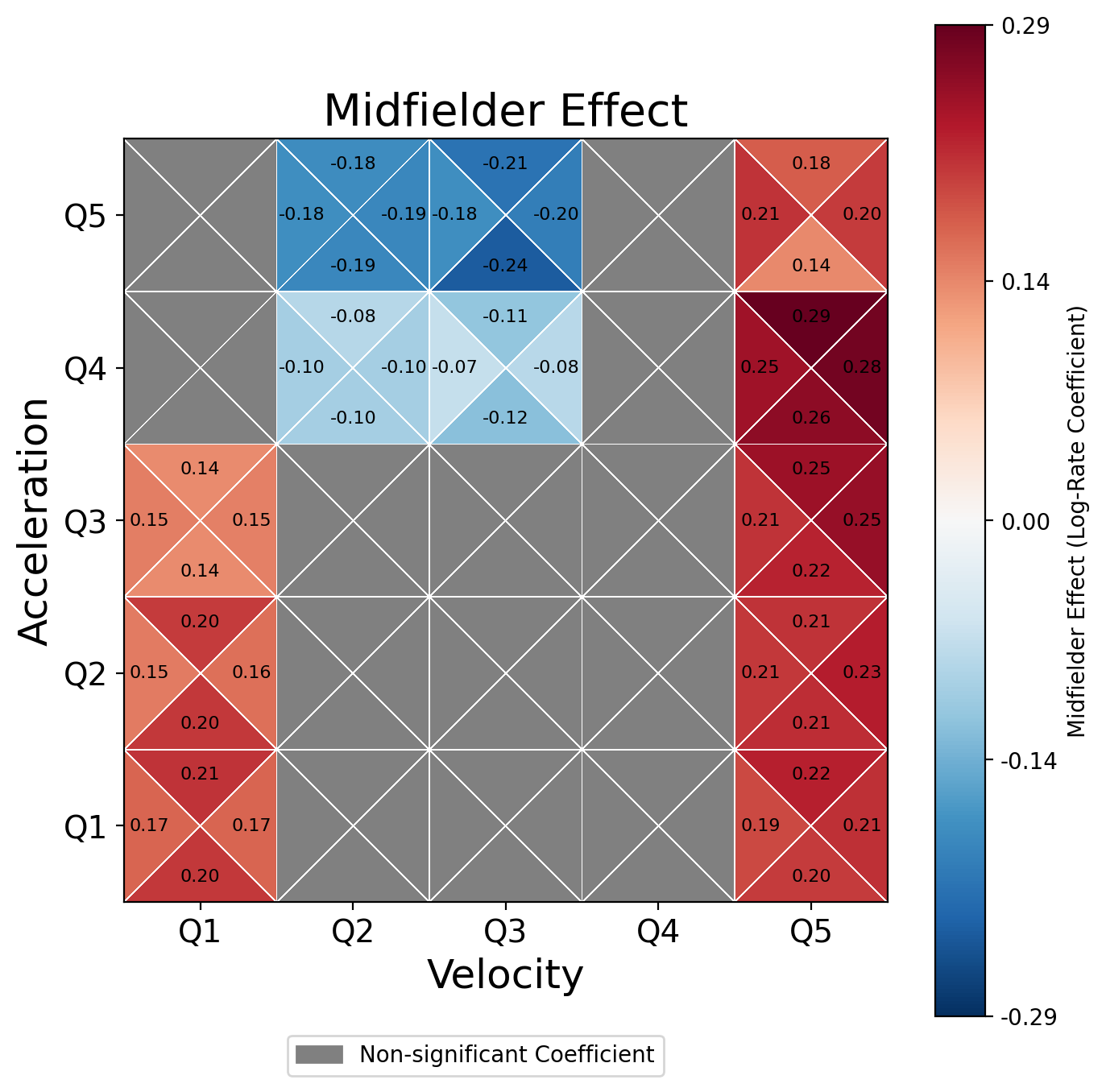}
    \caption{Midfielder vs.\ Defender}
    \label{fig:cube_mid}
  \end{subfigure}
  \hfill
  \begin{subfigure}[b]{2.5in}
    \centering
    \includegraphics[width=2.5in]{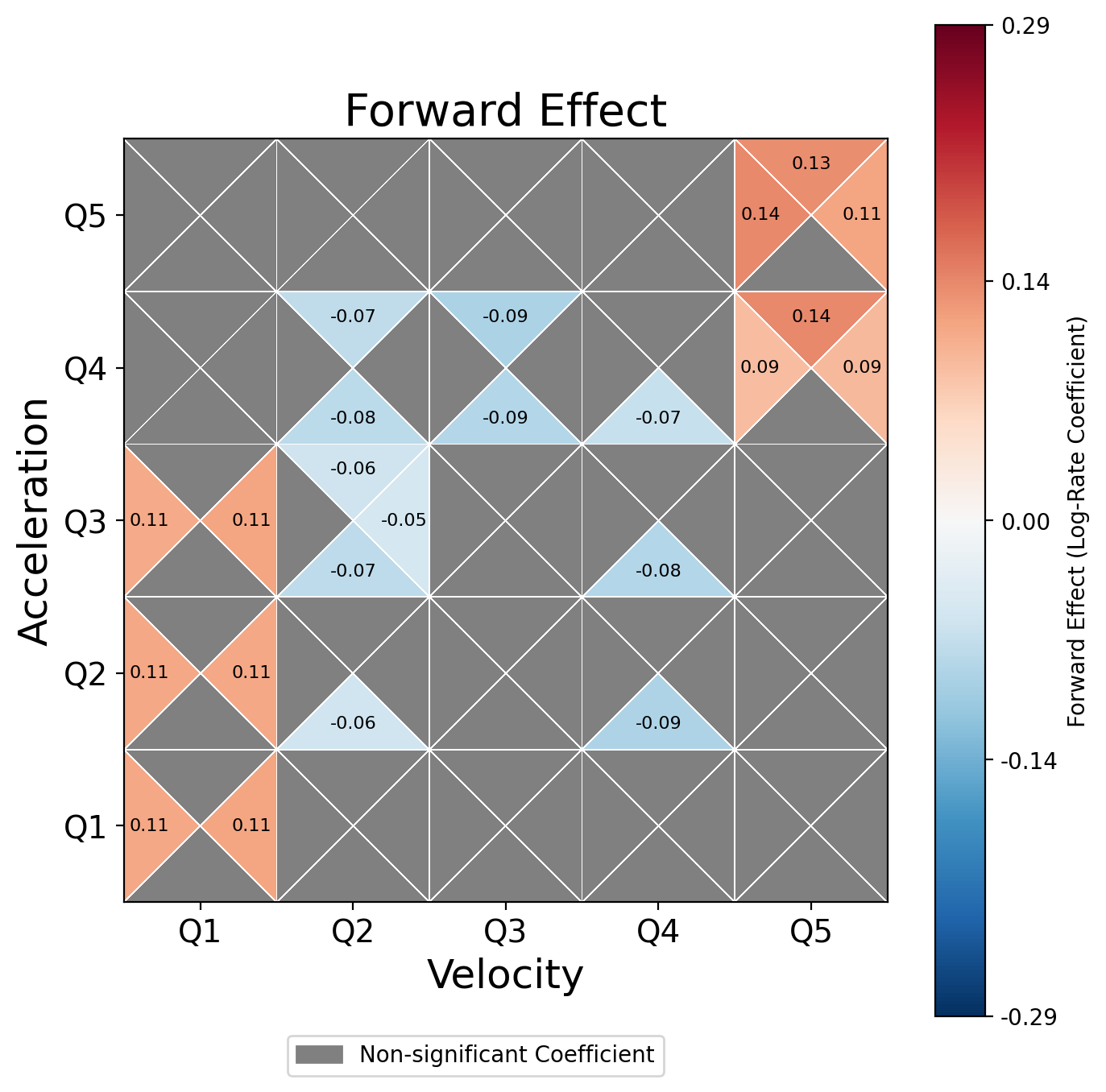}
    \caption{Forward vs.\ Defender}
    \label{fig:cube_fwd}
  \end{subfigure}
  \caption{Fiducial mean covariate effects by movement bin from the application model, displayed as quantile cubes (velocity $\times$ acceleration quintiles, with the four movement angles as triangles within each cell). Bins whose effect is significant under the working model at directional \gls{fdr} $0.05$ are shown in color, and bins with no sign call are greyed. Section~\ref{sec:app:multiplicity} shows that this criterion is not calibrated at the athlete level, so color marks where the fitted model resolves its largest contrasts among the $17$ athletes, not where an effect is established.}
  \label{fig:app_cubes}
\end{figure}

\subsection{A Smooth Basis-Function Extension}
\label{sec:app:basis}

The per-bin model of Section~\ref{sec:app:fit} estimates each covariate effect $\beta_{jk}$ independently across the $100$ movement bins, ignoring the smooth geometric structure of the velocity--acceleration--angle grid. We therefore consider an extension that borrows strength across neighboring bins by expressing the covariate effects as a smooth function of a bin's grid coordinates.

With $r_{ij}$ the \gls{nb} shape of Section~\ref{sec:method}, so that $\mu_{ij}=r_{ij}/\nu_j$, we place the smooth structure on the linear predictor,
\begin{equation}
\label{eq:basis}
\log\!\left(\frac{r_{ij}}{t_i}\right)=\alpha_j+\tilde X_i^\top\gamma\,\phi_j,
\end{equation}
where $\alpha_j$ is a free bin-specific intercept, $\tilde X_i\in\mathbb R^{K-1}$ collects the non-intercept covariates ($K-1=3$ here, since $\alpha_j$ absorbs the intercept), $\phi_j\in\mathbb R^M$ is a fixed vector of basis functions evaluated at the grid coordinates of bin $j$, and $\gamma\in\mathbb R^{(K-1)\times M}$ is a shared coefficient matrix. Equivalently, the effect of covariate $k$ in bin $j$ is
$$\beta_{jk}=\sum_{m=1}^M\gamma_{km}\phi_{jm},$$
so the covariate effects vary smoothly over the movement grid rather than being $100$ unrestricted bin-specific coefficients. Because $\nu_j$ does not depend on the covariates, the induced log mean rate is $\log(\mu_{ij}/t_i)=\alpha_j-\log\nu_j+\tilde X_i^\top\gamma\,\phi_j$, so $\beta_{jk}$ keeps the multiplicative interpretation of Section~\ref{sec:app:effects}.

\paragraph{Basis functions} Let $v_j,a_j\in\{1,\ldots,5\}$ denote the velocity and acceleration quintiles of bin $j$, centered as $v_j^c=v_j-3$ and $a_j^c=a_j-3$, and let $d^{LR}_j,d^{FB}_j\in\{-1,0,1\}$ encode the four movement directions (left/right and forward/backward). We consider a linear basis $\phi_j=(v_j^c,\,a_j^c,\,d^{LR}_j,\,d^{FB}_j,\,v_j^ca_j^c)$ with $M=5$, and a quadratic extension adding $(v_j^c)^2$ and $(a_j^c)^2$ with $M=7$. The basis columns are scaled to comparable magnitude to improve conditioning. The dispersion is modeled either as bin-specific ($\nu_j$, as before) or as a smooth function of the same basis, $\log\nu_j=\delta_0+\phi_j^\top\delta$, giving the four variants of Table~\ref{tab:basis_variants}.

\begin{table}[tbp]
\centering
\begin{tabular}{ccccc}
\toprule
Variant & Basis & Dispersion & Parameter count & Total \\
\midrule
1 & $M=5$ & bin-specific $\nu_j$ & $2d+(K-1)M=200+15$        & $215$ \\
2 & $M=7$ & bin-specific $\nu_j$ & $2d+(K-1)M=200+21$        & $221$ \\
3 & $M=5$ & smooth $\nu$         & $d+(K-1)M+M+1=100+15+6$   & $121$ \\
4 & $M=7$ & smooth $\nu$         & $d+(K-1)M+M+1=100+21+8$   & $129$ \\
\bottomrule
\end{tabular}
\caption{Basis-function model variants, with parameter counts for $d=100$ movement bins and $K-1=3$ covariates (standardized age, forward and midfielder indicators). The smooth-dispersion variants replace the $d=100$ bin-specific dispersion parameters with $M+1$ basis coefficients.}
\label{tab:basis_variants}
\end{table}

\paragraph{Fiducial density} Because the coefficient matrix $\gamma$ is shared across bins, the structural-equation Jacobian is no longer block diagonal across bins as in Proposition~\ref{prop:jacobian}. It couples all bins through $\gamma$, and the Jacobian factor is the determinant of the Gram matrix $J_C^\top J_C$ formed from the parameter directions that appear in the continuous structural equations. Those directions are $\gamma$, which enters the continuous row of every normal-assigned observation, together with the intercepts and dispersions of the bins that have at least one normal-assigned observation. The intercept and dispersion of each of the $26$ entirely discrete bins appear in no continuous row and, following the active-component convention of Section~\ref{sec:theory}, are excluded from the Jacobian. They enter the fiducial density through the \gls{nb} likelihood alone, without the correction $\mathbb E_{\btheta}[v_{\mathbf U}(\btheta)]$ of Lemma~\ref{lem:discrete}. This is the flat-prior treatment of those bins, and the comparison in \discretecompapp{} shows that the omitted correction moves their fiducial means by about $0.02$ posterior standard deviations on average, and by at most $0.07$ in any bin. The total parameter dimension ranges from $121$ to $221$ across the variants of Table~\ref{tab:basis_variants}. For the preferred variant it is $221$, of which $169$ enter the Jacobian. The model in \eqref{eq:basis} is thus a lower-dimensional constrained extension of the component-specific hybrid model, and the mixed discrete--continuous argument of Theorem~\ref{thm:jointdensity} motivates the same likelihood-times-Jacobian construction for its active parameters, with a dense rather than block-diagonal Gram matrix. The \gls{bvm} verification of Section~\ref{sec:asymptotics}, however, covers only the component-specific parameterization (Section~\ref{sec:discuss:limitations}).

\paragraph{Fitting and model selection} The basis models use the same observation-level partition as the per-bin model of Section~\ref{sec:app:fit}: an observation of a bin contributes a normal density and a Jacobian row when it is assigned to the normal component, and an \gls{nb} mass otherwise. All four variants were fit by Hamiltonian Monte Carlo with identical settings and converged ($\widehat R\leq1.019$, no divergent transitions). The quadratic basis ($M=7$) is necessary. The linear basis ($M=5$) systematically over-smooths the position contrasts, shrinking effects below the per-bin $95\%$ interval in $18$ to $20$ (forward) and $34$ to $35$ (midfielder) bins, whereas the quadratic basis reduces this to at most one. The age surface is smoother to begin with, with at most three bins shrunk below the interval under the linear basis and at most two under the quadratic basis. The dispersion model has little effect on the mean effects: the bin-specific and smooth-dispersion surfaces agree with correlations of $0.95$ to $1.00$ across covariates under the quadratic basis, and $0.86$ to $1.00$ under the linear basis. The smooth dispersion is, however, a poor summary of the bin-specific estimates themselves, correlating $0.70$ with the per-bin dispersions under the quadratic basis against $1.00$ for the bin-specific parameterization. We therefore adopt variant~2 (quadratic basis, bin-specific dispersion).

\paragraph{Comparison with the per-bin model} Relative to the independent per-bin fits of Section~\ref{sec:app:effects}, the basis model yields substantially narrower intervals. The median $95\%$ interval width is roughly $0.27$ times the corresponding per-bin width for every covariate, reflecting the smoothness constraint imposed across the grid. The two estimated surfaces are highly correlated for the midfielder contrast ($0.92$) and age ($0.84$), and less strongly for the forward contrast ($0.62$). The proportion of basis fiducial means falling within the per-bin $95\%$ intervals is $93\%$ for age, $82\%$ for midfielders and $75\%$ for forwards. Where the two models disagree, the pattern is consistent with smoothing: the basis surface rarely falls below the per-bin lower bound (two bins for age, none for midfielders and one for forwards) but more often exceeds the upper bound ($5$, $18$ and $24$ bins, respectively).

The colored-bin counts are not directly comparable across the two analyses. Figure~\ref{fig:app_cubes} uses directional \gls{bh} adjustment across the component-specific per-bin effects, whereas Figure~\ref{fig:basis_effects} uses simultaneous fiducial bands across the basis-function surface, reflecting joint uncertainty in that surface. The comparisons above concern the fitted surfaces and intervals themselves and do not depend on either significance rule.

\begin{figure}[tbp]
  \centering
  \begin{subfigure}[b]{5in}
    \centering
    \includegraphics[width=2.5in]{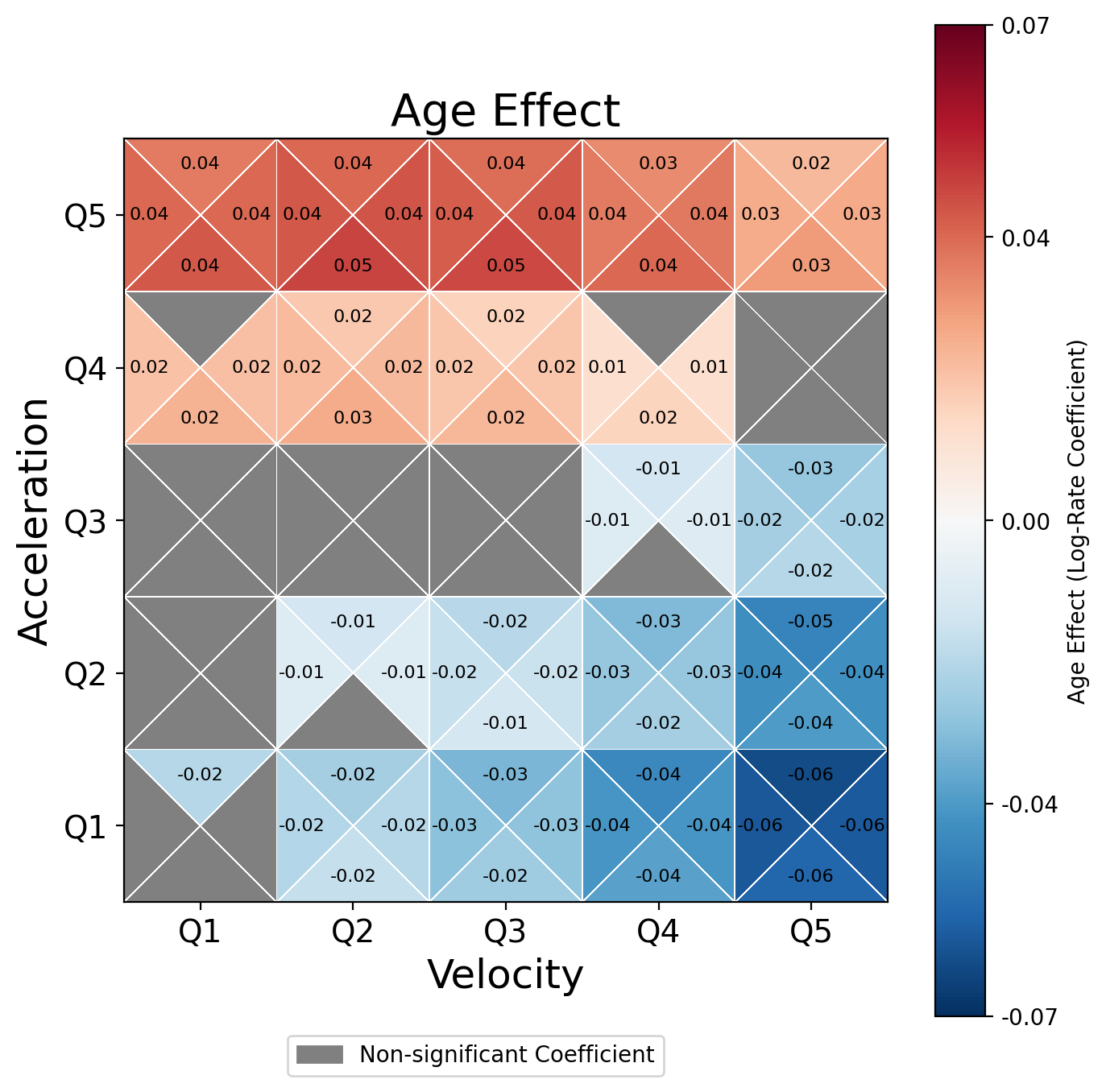}
    \caption{Age}
    \label{fig:basis_age}
  \end{subfigure}
  \\[1ex]
  \begin{subfigure}[b]{2.5in}
    \centering
    \includegraphics[width=2.5in]{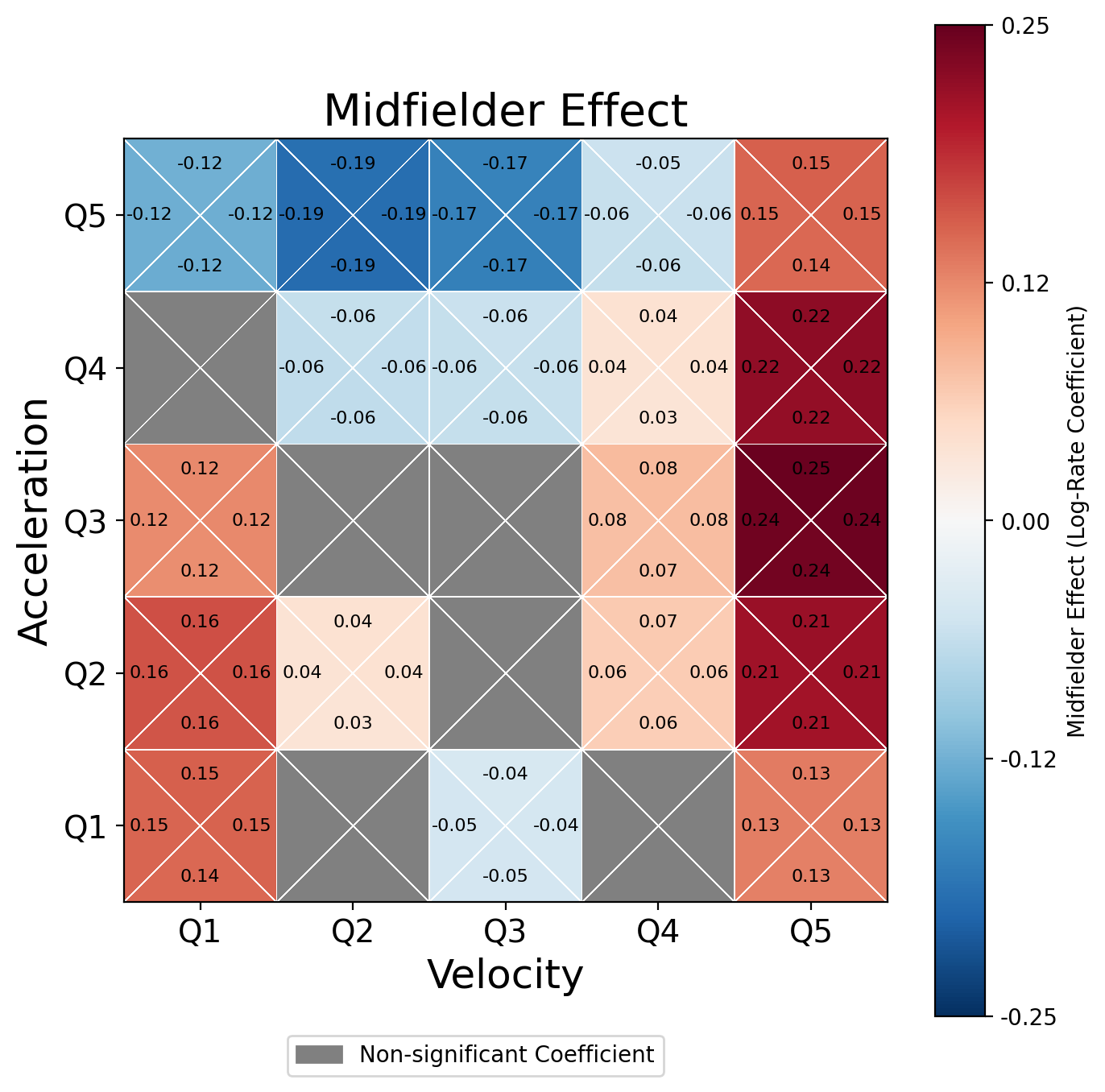}
    \caption{Midfielder vs.\ Defender}
    \label{fig:basis_mid}
  \end{subfigure}
    \hfill
  \begin{subfigure}[b]{2.5in}
    \centering
    \includegraphics[width=2.5in]{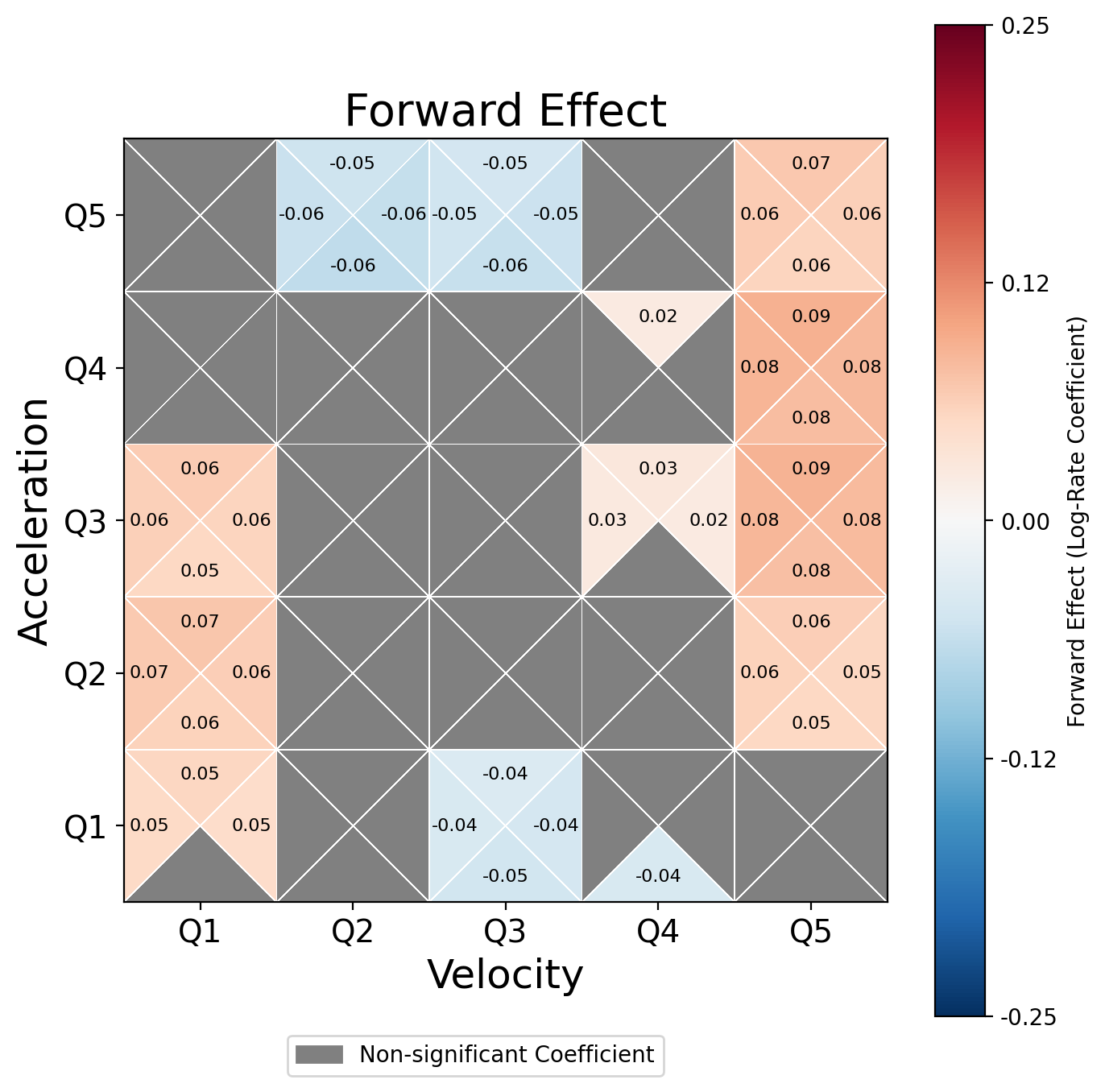}
    \caption{Forward vs.\ Defender}
    \label{fig:basis_fwd}
  \end{subfigure}
  \caption{Each panel shows the fiducial mean effect $\beta_{jk}=\sum_m \gamma_{km}\phi_{jm}$ for one covariate on the velocity $\times$ acceleration grid, with the four movement-angle sectors shown as triangles within each cell. Colors indicate the sign of effects whose $95\%$ simultaneous fiducial bands under the working-independence model exclude zero. Uncolored bins have bands that include zero. These bands inherit the calibration limitation of Section~\ref{sec:app:multiplicity}. The bands are constructed from the joint fiducial draws for each covariate surface using the maximum absolute studentized deviation across all $100$ bins. Their critical values range from $3.16$ to $3.17$, compared with the pointwise value of $1.96$. Position contrasts are relative to defenders, and the two position panels share a common color scale.}
  \label{fig:basis_effects}
\end{figure}

\section{Discussion}
\label{sec:discuss}

We have developed a \gls{hgfi} framework for multivariate count data in which each observation is partitioned into components with small \gls{nb} shape, modeled exactly through a \gls{nb} structural equation, and components with large shape, modeled through a normal approximation with the same mean and variance. The generalized fiducial density factors into the working likelihood and a Jacobian generated entirely by the continuous structural equations, block diagonal across components, and the discrete construction of Section~\ref{sec:method:discrete} handles components that are \gls{nb} for every observation. The simulation study found empirical coverage close to nominal across the discrete, mixed and continuous regimes, and it showed that the partition must not be formed from the realized response and should be based on the \gls{nb} shape rather than the conditional mean: when the dispersion varies, a partition by an estimated mean under-covers in both the sparse and mixed regimes, whereas a partition by an estimated shape matches the oracle. The same structure, discrete components entering through the likelihood alone and continuous components also generating the geometry, applies to other settings that allocate constrained counts across components, including microbiome, genomic and ecological studies \citep{linden_mantyniemi_2011, anders_huber_2010, chen_li_2013, combettes_muller_2021, devalpine_harmon_2013}.

Fiducial and flat-prior Bayesian intervals are essentially indistinguishable in length and calibration once $n$ is moderate, as Theorem~\ref{thm:bvm} predicts. The practical distinction, as argued in Section~\ref{sec:intro}, lies in prior specification rather than in interval width: the fiducial construction derives its Jacobian from the structural equations, and that factor adapts to the realized design and observed counts rather than being fixed in advance.

\subsection{Interpretation of the Application}
\label{sec:discuss:soccer}

Section~\ref{sec:app:multiplicity} showed that the estimated surfaces cannot be distinguished from label noise at the athlete level, so what follows is interpretation of the fitted patterns, not a claim that they are established. Two stand out.

The age effect connects our analysis to the literature on athletic aging curves. That literature relates an athlete's age to average performance outcomes (such as batting average in baseball, points per game in basketball, or goals in soccer) and traces a characteristic rise to a peak followed by decline \citep{fair_2008, bradbury_2009, page_et_al_2013}. Recent methodological work refines these curves through regression, imputation and survival-bias corrections, but the object of inference remains an outcome curve \citep{schuckers_lopez_macdonald_2023, nguyen_matthews_2024}. Soccer-specific analyses place the peak at roughly $25$ to $27$ years \citep{dendir_2016}. A limitation of the outcome-curve framing is that two athletes can post identical output with very different movement underneath it. Our model targets that hidden layer directly: holding position and exposure fixed, it asks whether age reshapes how an athlete moves, not only how well they perform. The acceleration gradient of Section~\ref{sec:app:effects} speaks to that layer, but our design is cross-sectional across athletes rather than longitudinal within athletes, and age is confounded with role, selection and survivorship in a single-season roster. The gradient is therefore consistent with, but cannot establish, a within-athlete aging effect. It may instead reflect experience-driven role selection. Separating these explanations would require multi-season, within-athlete data, which also motivates the athlete-level random-effects extension of Section~\ref{sec:discuss:future}.

The midfielder pattern of Section~\ref{sec:app:effects}, with more time at both velocity extremes and less in the highest-acceleration bins at low-to-mid velocity, may reflect a role that combines sustained low-speed possession and positional play with high-speed transitions while involving fewer maximal-acceleration bursts than defenders \citep{griffin_2021, panduro_2022}. The forward pattern is consistent with an intensity profile that alternates sprinting with lower-speed holding phases. These are interpretations of contrasts against a defender baseline, not established tactical facts, and they concern the distribution of movement rather than the aggregate distance and high-speed-running totals of prior work \citep{datson_2017, panduro_2022}. Distributional changes of this kind may inform athlete monitoring and individualized training \citep{bourdon_2017, snyder_2024, thomas_hannig_2026} in women's soccer, which remains underrepresented in this literature \citep{ferraz_2023}.

The two model formulations answer complementary questions. By borrowing strength across neighboring bins, the smooth basis-function model yields substantially narrower intervals under its smoothness constraint and provides a useful summary of broad positional structure. The independent per-bin model remains preferable when localized effects are of interest, including the band of negative effects at mid-range velocities that the basis-function model attenuates. Reporting both helps avoid treating either the smoothed surface or the noisier per-bin map as the single definitive representation.

\subsection{Limitations}
\label{sec:discuss:limitations}

The primary limitation is the independence assumption of the working model. The \gls{gfd} and the \gls{bvm} theorem are established for the conditionally independent hybrid model, in which the component responses are independent given the design and partition. The soccer counts, however, satisfy a fixed-sum constraint: each athlete--match session allocates every decisecond of the observation window across the $100$ bins, so conditional on the window length each row of the conditional covariance matrix sums to zero. The working model does not represent this cross-component dependence, our theory does not establish robustness to it, and the application must accordingly be read as a working-model analysis rather than as inference under a specified compositional probability model. The most direct way to assess the cost of the approximation is to generate data with the relevant cross-component dependence and compare the empirical coverage of the working-model fiducial intervals with their nominal level (Section~\ref{sec:discuss:future}).

A second departure concerns dependence across observations. The $216$ sessions come from only $17$ athletes and $26$ matches: an athlete's movement profile is correlated across their own sessions, and match conditions are shared within a match. This matters most for the covariates that vary mainly between athletes rather than within them, age and position, for which the reported intervals should be read as descriptive rather than as calibrated frequentist statements. The permutation check of Section~\ref{sec:app:multiplicity} quantifies this. A procedure calibrated at the athlete level would need either athlete-level random effects (Section~\ref{sec:discuss:future}) or a reference distribution built from athlete-level resampling. Elite-athlete tracking studies often have this structure, with many high-frequency sensor observations but few independent athletes, teams or sessions, so finite-sample calibration remains a practical concern even where the asymptotic theory provides a formal justification.

Three limitations concern the scope of the theory. The first is the misspecification of the normal branch under the \gls{nb} data-generating process: the discrepancy between the working-model information and the repeated-sampling variance is of order $1/r_{ij}$ and does not vanish in $n$, so the routing threshold, not the sample size, controls calibration. Theorem~\ref{thm:bvm} is therefore a statement about the working model, and Section~\ref{sec:sims:dispersion} shows what happens when the threshold is too low. The second is the plug-in partition of Sections~\ref{sec:method} and~\ref{sec:app:fit}: the asymptotic results are stated for the oracle partition, and a formal treatment of the estimated one remains open. One route would replace the hard threshold by a smooth transition between the two representations, making the working likelihood a continuous function of the estimated partition and avoiding the discontinuous assignments that complicate the asymptotic argument. Within the \gls{gfi} framework this requires a structural-equation formulation of the transition, since the Jacobian is generated by the continuous structural equations rather than by the likelihood alone.

The third is the basis-function extension, which couples the bins through a shared coefficient matrix and so falls outside the block-diagonal structure of Proposition~\ref{prop:jacobian}. As noted in Section~\ref{sec:app:basis}, the mixed discrete--continuous argument of Theorem~\ref{thm:jointdensity} motivates the same likelihood-times-Jacobian construction there, but the asymptotic conditions of Section~\ref{sec:asymptotics} were verified only for the component-specific parameterization. The same extension would let the intercept and dispersion of an entirely discrete bin in the coupled model be handled by the discrete construction rather than the likelihood alone. The representation of Lemma~\ref{lem:discrete} extends to parameters pinned only by discrete constraints once the remaining parameters are fixed, and at fixed dispersion the consistent set for a bin's intercept is an interval, the one-dimensional case of the polytope in the per-bin construction, so the computation is cheap. The corresponding density, however, has not been derived.

A last limitation is computational. The discrete construction costs about three orders of magnitude more than the corresponding Stan fit. The simulation study and the athlete-level permutation check use it for every entirely discrete fit, the latter with shorter chains (\permutationapp).

\subsection{Extensions and Future Directions}
\label{sec:discuss:future}

A central next step is to relax the working model's independence assumption. Athlete-tracking data exhibit several forms of dependence: the movement bins are compositionally dependent because every decisecond of an observation window is allocated to one bin, observations from the same athlete are correlated across sessions, and adjacent time segments may be temporally dependent. Work on \gls{gfi} for linear mixed models provides a precedent for hierarchical structure \citep{cisewski_hannig_2012}, but extending these ideas to discrete, overdispersed count data is an open problem. Natural next steps are hybrid \gls{nb} models with athlete- and session-level random effects \citep{martin_2019, ma_dunton_hedeker_2025}, explicit temporal dependence across adjacent time segments, a compositional construction that imposes the fixed-sum constraint directly, and a high-dimensional regime in which the number of components grows with the sample size.

The hybrid construction itself suggests two further extensions. First, scalable and distributed computation for \gls{gfi} \citep{lai_hannig_lee_2021, du_hannig_etal_2025} could support fits with more components and observations than the componentwise scheme used here accommodates. Second, because Theorem~\ref{thm:jointdensity} depends on the discrete components only through the working likelihood, the construction may be adapted to other discrete families, including zero-inflated or hurdle components \citep{mullahy_1986, lambert_1992, risso_et_al_2018, feng_2021}, provided the continuous structural equations retain full column rank. This suggests applications to count data with excess zeros or latent subpopulations.

\begin{acks}[Acknowledgments]
The authors would like to extend their sincere thanks to the professional women's soccer team, coaching staff, and support personnel who facilitate data collection and generously share their time and expertise. 
\end{acks}

\begin{funding}
The work was supported in part by the National Science Foundation under Grant No. DMS-2210337 and 2515303, and by the United States--Israel Binational Science Foundation (BSF), Jerusalem, under Grant No. 2024055.
\end{funding}

\begin{supplement}
\stitle{Supplementary Material for ``Generalized Fiducial Inference for Hybrid Discrete--Continuous Count Data: Theory and Application''}
\sfilename{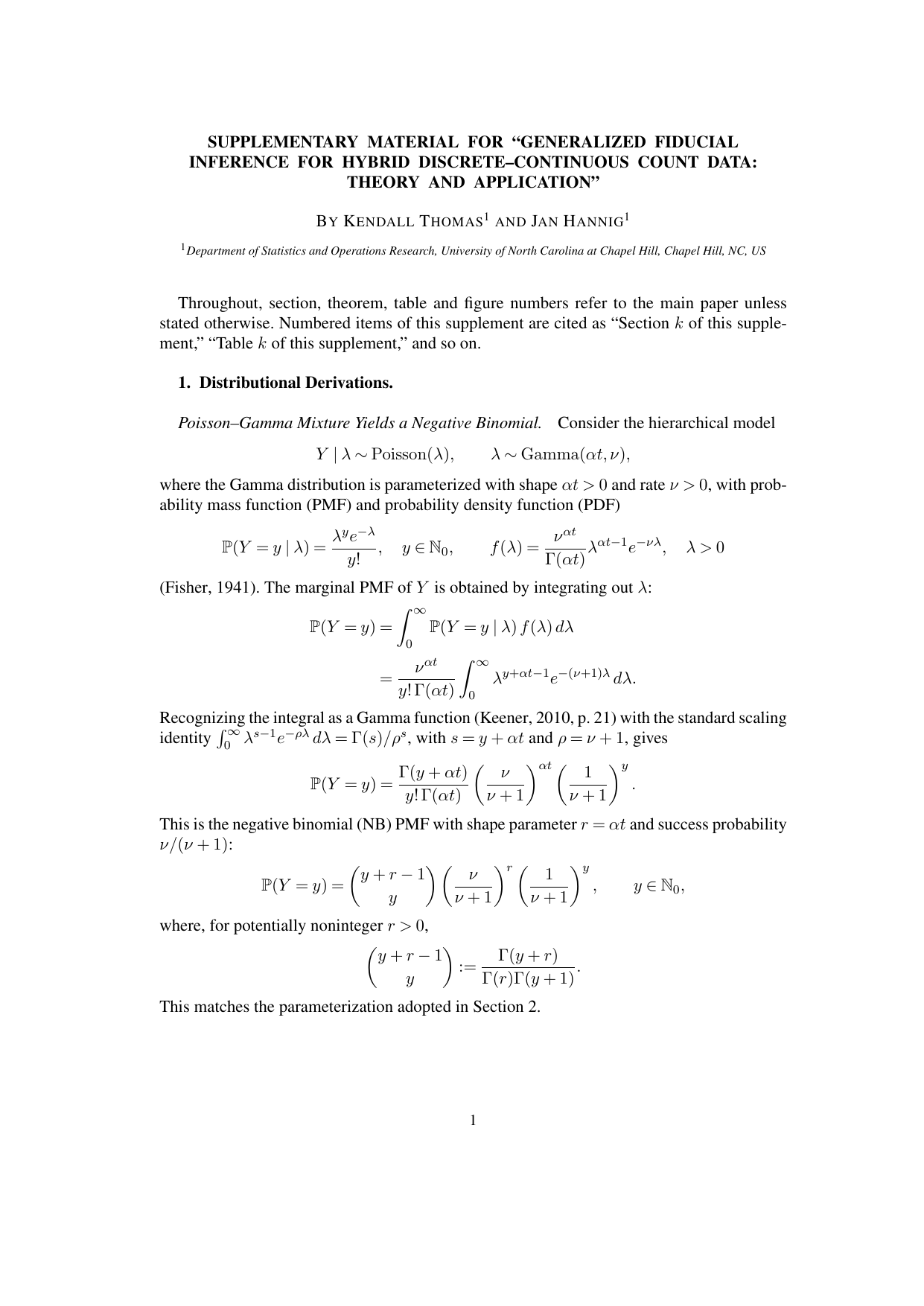}
\sdescription{Contains the distributional derivations for the \gls{nb} working model; the score, Hessian, and Fisher information for the hybrid model; the proofs of Theorem~\ref{thm:jointdensity}, Proposition~\ref{prop:jacobian}, Lemma~\ref{lem:posdef}, and Theorem~\ref{thm:bvm}, including verification of the five regularity conditions of \citet{borgert_hannig_2026}; the full coverage results of the simulation study; the proof of Lemma~\ref{lem:discrete} together with the sampler for the discrete generalized fiducial distribution, its validation against exact draws, its comparison with the flat-prior posterior, and the diagnostics for the application's entirely discrete bins; and the athlete-level permutation check and threshold sensitivity for the application. This document is appended after the references.}
\end{supplement}

\begin{supplement}
\stitle{Code for the simulation study}
\sfilename{thomas\_hannig\_code\_supplement.zip}
\sdescription{Stan models and Python scripts for the simulation study of Section~\ref{sec:sims}, with the result files from which its tables and figures are rebuilt, together with the sampler for the discrete generalized fiducial distribution of Section~\ref{sec:method:discrete}, the single-site Gibbs sampler, and the exact rejection reference used to validate them (\discretecompapp). The GPS tracking data analyzed in Section~\ref{sec:app} were collected under \gls{irb} 25-1234 subject to a data-sharing agreement that does not permit redistribution, so neither those data nor the application-specific code are included. The code and results are also available at \url{https://github.com/kelanethomas/hgfi-count-data}.}
\end{supplement}

\bibliographystyle{imsart-nameyear} 
\bibliography{references}       

\includepdf[pages=-]{supplement.pdf}

\end{document}